\begin{document}

%%
%% The "title" command has an optional parameter,
%% allowing the author to define a "short title" to be used in page headers.
\title{On the Cross-type Homophily of Heterogeneous Graphs: Understanding and Unleashing}

%%
% The "author" command and its associated commands are used to define
% the authors and their affiliations.
% Of note is the shared affiliation of the first two authors, and the
% "authornote" and "authornotemark" commands
% used to denote shared contribution to the research.
\author{Zhen Tao}
\affiliation{%, , 
  \institution{Nanjing University}
  \city{Nanjing}
  \country{China}
}
\email{zhentao.tz@gmail.com}

\author{Ziyue Qiao}
\affiliation{%
  \institution{Great Bay University}
  \city{Dongguan}
  \country{China}}
\email{ziyuejoe@gmail.com}

\author{Chaoqi Chen}
\affiliation{%
  \institution{Shenzhen University}
  \city{Shenzhen}
  \country{China}
}
\email{cqchen1994@gmail.com}

\author{Zhengyi Yang}
\affiliation{%
 \institution{University of New South wales}
 \city{Sydney}
 \country{Australia}}
\email{zhengyi.yang@unsw.edu.au}

\author{Lun Du}
\affiliation{%
  \institution{Ant Research}
  \city{Beijing}
  \country{China}}
\email{dulun.dl@antgroup.com}

\author{Qingqiang Sun$^*$}
\affiliation{%
  \institution{Great Bay University}
  \city{Dongguan}
  \country{China}}
\email{qqsun@gbu.edu.cn}
\thanks{$^*$Corresponding author.}

\renewcommand{\shortauthors}{Zhen Tao et al.}

%%
%% By default, the full list of authors will be used in the page
%% headers. Often, this list is too long, and will overlap
%% other information printed in the page headers. This command allows
%% the author to define a more concise list
%% of authors' names for this purpose.

%%
%% The abstract is a short summary of the work to be presented in the
%% article.
\begin{abstract}
Homophily, the tendency of similar nodes to connect, is a fundamental phenomenon in network science and a critical factor in the performance of graph neural networks (GNNs). While existing studies primarily explore homophily in homogeneous graphs, where nodes share the same type, real-world networks are often more accurately modeled as heterogeneous graphs (HGs) with diverse node types and intricate cross-type interactions. This structural diversity complicates the analysis of homophily, as traditional homophily metrics fail to account for distinct label spaces across node types. To address this limitation, we introduce the Cross-Type Homophily Ratio (CHR), a novel metric that quantifies homophily based on the similarity of target information across different node types. Additionally, we propose \textit{Cross-Type Homophily-guided Graph Editing (CTHGE)}, a novel method for improving heterogeneous graph neural networks (HGNNs) performance by optimizing cross-type connectivity using Cross-Type Homophily Ratio.
Extensive experiments on five HG datasets with nine HGNNs validate the effectiveness of CTHGE, which delivers a maximum relative performance improvement of over 25\% for HGNNs on node classification tasks, offering a fresh perspective on cross-type homophily in HGs learning.
\vspace{-5pt}  
\end{abstract}

%%
%% The code below is generated by the tool at http://dl.acm.org/ccs.cfm.
%% Please copy and paste the code instead of the example below.
% %%
% \begin{CCSXML}
% <ccs2012>
%  <concept>
%   <concept_id>00000000.0000000.0000000</concept_id>
%   <concept_desc>Do Not Use This Code, Generate the Correct Terms for Your Paper</concept_desc>
%   <concept_significance>500</concept_significance>
%  </concept>
%  <concept>
%   <concept_id>00000000.00000000.00000000</concept_id>
%   <concept_desc>Do Not Use This Code, Generate the Correct Terms for Your Paper</concept_desc>
%   <concept_significance>300</concept_significance>
%  </concept>
%  <concept>
%   <concept_id>00000000.00000000.00000000</concept_id>
%   <concept_desc>Do Not Use This Code, Generate the Correct Terms for Your Paper</concept_desc>
%   <concept_significance>100</concept_significance>
%  </concept>
%  <concept>
%   <concept_id>00000000.00000000.00000000</concept_id>
%   <concept_desc>Do Not Use This Code, Generate the Correct Terms for Your Paper</concept_desc>
%   <concept_significance>100</concept_significance>
%  </concept>
% </ccs2012>
% \end{CCSXML}

% \ccsdesc[500]{Do Not Use This Code~Generate the Correct Terms for Your Paper}
% \ccsdesc[300]{Do Not Use This Code~Generate the Correct Terms for Your Paper}
% \ccsdesc{Do Not Use This Code~Generate the Correct Terms for Your Paper}
% \ccsdesc[100]{Do Not Use This Code~Generate the Correct Terms for Your Paper}

\begin{CCSXML}
<ccs2012>
   <concept>
       <concept_id>10010147.10010257.10010293.10010294</concept_id>
       <concept_desc>Computing methodologies~Neural networks</concept_desc>
       <concept_significance>500</concept_significance>
       </concept>
 </ccs2012>
\end{CCSXML}

\ccsdesc[500]{Computing methodologies~Neural networks}
\vspace{-18pt}  
%%
%% Keywords. The author(s) should pick words that accurately describe
%% the work being presented. Separate the keywords with commas.
\keywords{Heterogeneous Graph Neural Networks, Homophily Ratio, Graph Editing}
\vspace{-18pt}  

%% A "teaser" image appears between the author and affiliation
%% information and th-CTHGEe body of the document, and typically spans the
%% page.
% \begin{teaserfigure}
%   \includegraphics[width=\textwidth]{sampleteaser}
%   \caption{Seattle Mariners at Spring Training, 2010.}
%   \Description{Enjoying the baseball game from the third-base
%   seats. Ichiro Suzuki preparing to bat.}
%   \label{fig:teaser}
% \end{teaserfigure}

% \received{20 February 2007}
% \received[revised]{12 March 2009}
% \received[accepted]{5 June 2009}

%%
%% This command processes the author and affiliation and title 
%% information and builds the first part of the formatted document.
\vspace{-18pt} 

\maketitle
\section{Introduction}
Homophily—the tendency for similar nodes to connect—is a well-established phenomenon in network science~\cite{pei2020geom, sun2024interdependence, luan2020complete, luan2024heterophilic}. Traditionally, it is defined on homogeneous graphs and quantified based on the similarity of node labels~\cite{sun2025single, pei2020geom,zheng2022graph}. Numerous studies have shown that Graph Neural Networks (GNNs)\cite{pei2020geom,zhang2024tatkc} perform better on graphs with high homophily \cite{lozares2014homophily, zhu2020beyond, luan2024heterophilic}.
However, real-world networks are often better modeled as heterogeneous graphs (HGs) which contain multiple types of nodes and edges to represent diverse and complex relationships~\cite{wang2022survey}. In HGs, the variation in node types and label spaces poses challenges for directly applying traditional homophily definitions (Fig.~\ref{fig:intro}). Moreover, cross-type edges---prevalent in many HGs---play a crucial role in structuring the network, yet existing definitions often overlook them by focusing solely on label similarity among same-type node connections. This reveals a significant gap between current homophily research in homogeneous settings and its application to heterogeneous graphs. 

\begin{figure}[t]
    \vspace{8pt}  
	\centering
	\includegraphics[width=0.99\linewidth]{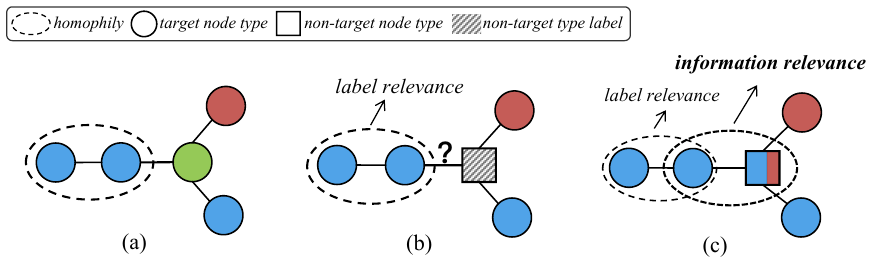}
    \vspace{-5pt}  
    \caption{Graph homophily in diverse cases where colors denote node labels: (a) the homophily in homogeneous graphs is considered based on node labels; (b) in HGs, label-relevance-based methods can only assess same-type homophily while failing to measure cross-type edge; (c) our method is capable of evaluating cross-type homophily in HGs by leveraging information relevance.}
	\label{fig:intro}
    \vspace{-18pt}  
\end{figure}
Could there exist a heterogeneity-aware metric for assessing Heterogeneous Graph Neural Network (HGNN) performance? Recent studies have shown that simplifying the structure of HGs by analyzing only same-type edges can influence HGNN effectiveness~\cite{luan2024heterophilic, guo2023homophily, li2023hetero}. However, such approaches neglect the substantial presence and potential informativeness of cross-type edges. Studying homophily from a cross-type perspective can provide complementary insights to improve label prediction, yet existing methods lack the capacity to effectively capture this aspect.

This paper aims to fill that gap by focusing on cross-type edges and introducing a novel formulation of cross-type homophily. To the best of our knowledge, this is the first work to systematically explore homophily in HGs from this perspective and to propose an effective solution.
To formalize the problem, we identify two core research questions that lie at the heart of cross-type homophily in HGs: 1) \emph{How can cross-type homophily be formally defined and meaningfully interpreted in heterogeneous graphs?} 2) \emph{How can this cross-type homophily be effectively and efficiently leveraged to improve the performance of heterogeneous graph neural networks (HGNNs)?}

To address the first challenge, we define Cross-Type Homophily Ratio (CHR), which distinguishes from traditional homophily metrics and quantifies homophily ratio between cross-type edges in HGs for the first time.  As shown in Fig.~\ref{fig:intro}(c), cross-type homophily broadens the scope of homophily research while also providing new perspectives for understanding the complex structures of HGs. Building on the principle of similarity-driven connections~\cite{mcpherson2001birds}, we proposed the concept of cross-type homophily and leverage target labels as a measure of target information to compute cross-type homophily ratio(CHR). Empirically, we constructed HGs with varying CHR values, confirming a positive correlation between CHR and HGNN performance. Theoretically, we analyzed and established the relationship between CHR and the generalization of HGNNs, demonstrating that an increase in CHR improves the generalization lower bound of HGNNs.

Regarding the second challenge, we propose the \underline{C}ross-\underline{T}ype \underline{H}omophily-guided \underline{G}raph \underline{E}diting (\textbf{CTHGE}), a novel method for improving HGNNs performance by optimizing cross-type connectivity using CHR. CTHGE operates in two phases: Phase I focuses on \textit{CHR-increasing Semantic Pruning} mechanism, where semantically misaligned edges are selectively removed, aiming to enhance CHR. Phase II consists of two components. The first integrates a \textit{Target-Driven Auxiliary Learning Paradigm} with confidence-based edge ranking. This approach leverages pre-training and fine-tuning to propagate label information from target-type nodes to their non-target neighbors, enabling pseudo-label supervision and transferring semantic knowledge across node types. The second component, \textit{Iterative Logits Refinement}, recalibrates semantic scores and pseudo-labels across multiple rounds, progressively refining edge selection and improving semantic consistency to align more accurately with target labels. The complexity analysis highlights the computational efficiency of our methods. As a plug-and-play framework, CTHGE can be integrated with various HGNNs with little effort. We implement CTHGE on top of nine representative HGNNs and evaluate their performance across five HG datasets, showcasing the effectiveness and adaptability of CTHGE. 

In a nutshell, our contributions can be highlighted as follows:

\begin{itemize}
\item[$\bullet$] We formally introduce Cross-Type Homophily as a novel concept for HGs and propose CHR as a novel metric to measure it, marking the first study to investigate homophily through cross-type edges in heterogeneous graphs.
\item[$\bullet$] We conduct theoretical analyses and empirical evaluations to investigate the connection between the CHR of HGs and classification performance of HGNNs, providing a comprehensive understanding of CHR.
\item[$\bullet$] We introduce CTHGE, an efficient graph editing strategy that leverages CHR to refine HG structure and significantly boosts HGNNs performance as a plug-and-play framework.
\item[$\bullet$] We conduct extensive experiments on five HG datasets with nine HGNNs. The results demonstrate the effectiveness of CTHGE which boosts the performance of diverse HGNNs by margins of up to 25\% across all datasets.

\end{itemize}

\section{Related Work}
\noindent\textbf{Heterogeneous Graph Neural Networks.} Recent advancements in HGNNs~\cite{zhou2020graph} can be categorized into meta-path-based and meta-path-free methods. Meta-path-based HGNNs rely on predefined meta-paths to define relationships. HAN~\cite{wang2019heterogeneous} transforms an HG into homogeneous subgraphs and uses attention for integration. GTN~\cite{yun2019graph} automates meta-path generation with learnable weights, and MEGNN~\cite{chang2022megnn} optimizes GTN’s efficiency. Meta-path-free methods focus on message passing and aggregation. RGCN~\cite{schlichtkrull2018modeling} aggregates neighbors by edge types, HGT~\cite{hu2020heterogeneous} uses a Transformer-based encoder. Simple-HGN~\cite{lv2021we} extends GAT with edge-type attention. HINormer~\cite{mao2023hinormer} combines local and heterogeneous relation encoders with GATv2~\cite{brody2021attentive} for improved node representation. These studies overlook the identification of HG characteristics that enhance model performance. Our work aims to explore these properties to further improve existing HGNNs.

\noindent\textbf{Heterophily in Graph Neural Networks.} Heterophily, the absence of homophily~\cite{lozares2014homophily}, is a significant challenge for GNNs~\cite{bo2021beyond, luan2022revisiting, chien2020adaptive, chen2023dirichlet,yan2022two}. Recent work has introduced several models and analyses to address this, such as Geom-GCN~\cite{pei2020geom} which uses network embeddings for structured neighborhoods, GBK-GNN~\cite{du2022gbk} which adjusts edge weights based on correlation blocks. Other approaches include UGCN~\cite{jin2021universal} with ranking-based aggregation and GloGNN~\cite{li2022finding} which integrates global node information. While evaluating heterophily in HGs is more complex due to the diversity of node types and interactions~\cite{luan2024heterophilic, ahn2022descent, xiong2023ground}. HDHGR~\cite{guo2023homophily} and Hetero2Net~\cite{li2023hetero} extend meta-path-based metrics for HGs. However, these approaches primarily adapt homogeneous graph principles, and overlook the issue of homophily across cross-type nodes in HGs. We propose a method that effectively handles homophily across cross-type nodes, providing a novel approach tailored for HGs.

\noindent\textbf{Graph Structure Learning and Graph Rewiring.} Graph structure learning (GSL) methods~\cite{jin2020graph,chen2020iterative,sun2022graph, yu2021graph, zheng2020robust,yang2019topology} refine graph structures from noisy data to uncover underlying relationships. These methods adjust node connections using metric learning~\cite{wang2020gcn}, probabilistic modeling~\cite{sun2022graph}, and direct optimization~\cite{jin2020graph}. GSL applications have also been extended to HGs with methods like HGSL~\cite{zhao2021heterogeneous} and SUBLIME~\cite{liu2022towards}, which uses contrastive learning. Graph rewiring improves connectivity for better training and inference efficiency~\cite{bi2024make,kenlay2021stability,guo2023homophily}, and graph pruning(GP) enhances GNN performance by removing irrelevant edges, preserving essential connectivity and reducing noise~\cite{rong2019dropedge, zheng2020robust, li2023less}. DropEdge~\cite{rong2019dropedge} and NeuralSparse~\cite{zheng2020robust} are popular GP methods. STEP~\cite{li2023less} offers a self-supervised pruning method for dynamic graphs. Our method performs graph editing by evaluating edge similarity to select edges, thereby enhancing the cross-type homophily of HGs and improving HGNN node classification performance.
\section{Preliminary Concepts}

\begin{definition}[Heterogeneous Graph] \label{def-1} A heterogeneous graph (HG) is defined as $\mathcal{G} = (\mathcal{V}, \mathcal{E}, \phi, \psi)$, where $\mathcal{V}$ and $\mathcal{E}$ denote the sets of nodes and edges, respectively. Each node $v \in \mathcal{V}$ is assigned a type $\phi(v)$, and each edge $e \in \mathcal{E}$ has a type $\psi(e)$. The sets of node and edge types are denoted as $T_v = {\phi(v) \mid v \in \mathcal{V}}$ and $T_e = {\psi(e) \mid e \in \mathcal{E}}$. When $|T_v| = |T_e| = 1$, $\mathcal{G}$ reduces to a homogeneous graph. \end{definition}

\begin{definition}[Homophily Ratio in Homogeneous Graphs] \label{def-2} Given a homogeneous graph $\mathcal{G} = (\mathcal{V}, \mathcal{E})$, the homophily ratio $H(\mathcal{G})$ quantifies the proportion of edges linking nodes with identical labels: \begin{equation}\label{eqn-HR} H(\mathcal{G}) = \frac{\sum_{(v_i, v_j) \in \mathcal{E}} \mathbb{I}(y_i = y_j)}{|\mathcal{E}|}, \end{equation} where $\mathbb{I}(\cdot)$ is the indicator function, and $y_i$, $y_j$ denote the labels of nodes $v_i$, $v_j$, respectively. \end{definition}

While homogeneous graphs involve a single edge type, heterogeneous graphs typically contain diverse cross-type edges (see Fig.~\ref{fig:edge_distribution_of_HGs}). However, homophily as defined in homogeneous settings does not trivially extend to HGs, posing challenges in defining and measuring cross-type homophily.

\begin{figure}[t]
	\centering
	\includegraphics[width=0.90\linewidth]{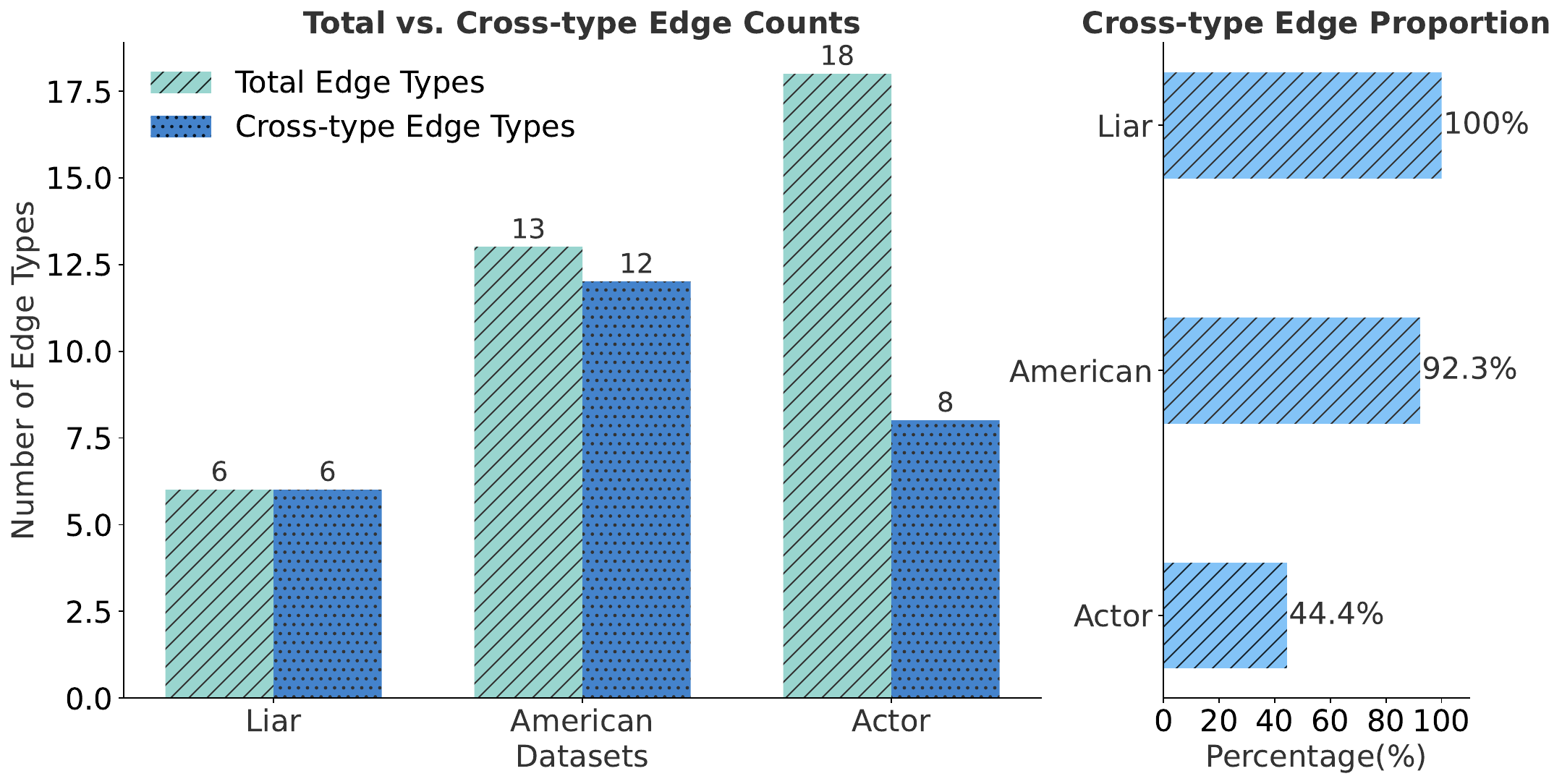}
	\caption{Edge type distribution across different heterogeneous graphs. The proportion of {Cross-type Edge Types} among the edge types in HGs is significant.}
	\label{fig:edge_distribution_of_HGs}
\end{figure}

\section{Understanding and Unleashing the Cross-type Homophily of HGs}
\subsection{Understanding of Cross-Type Homophily}
\label{subsection:Understanding HGs}
This subsection introduces the Cross-Type Homophily Ratio (CHR), a formal metric designed to quantify homophily across heterogeneous node types. We further provide both empirical and theoretical analyses to demonstrate its effectiveness in characterizing HGNN generalization and performance.

We begin by formalizing homophily in heterogeneous graphs through structural decomposition of node and edge types.

\begin{definition}[Node and Edge Types in Heterogeneous Graphs]
\label{def-3}
Let \( \mathcal{G} = (\mathcal{V}, \mathcal{E}, \phi, \psi) \) denote an undirected heterogeneous graph, where nodes are partitioned into target nodes \( \mathcal{V}_t \) and non-target nodes \( \mathcal{V}_n \), such that \( \mathcal{V} = \mathcal{V}_t \cup \mathcal{V}_n \). Edges are categorized into: (1) intra-target edges \( \mathcal{E}_{tt} \) between \( \mathcal{V}_t \), (2) cross-type edges \( \mathcal{E}_{tn} \) between \( \mathcal{V}_t \) and \( \mathcal{V}_n \), and (3) intra-non-target edges \( \mathcal{E}_{nn} \) between \( \mathcal{V}_n \). Corresponding adjacency matrices are denoted as \( \mathcal{A}_{tt} \), \( \mathcal{A}_{tn} \), and \( \mathcal{A}_{nn} \), with \( \mathcal{A}_{tn} = \mathcal{A}_{nt}^\top \) due to graph symmetry.
\end{definition}

\noindent\textbf{Target Information Representation.}  
Each node \( u \in \mathcal{V} \) is assigned a target information vector \( \mathcal{H}_u = [h_1, h_2, \dots, h_c] \), where \( c \) denotes the number of target categories, and \( h_i \) quantifies the degree of association with the \( i \)-th category.

\noindent\textbf{Target Information Initialization.}  
We construct the initial target information matrix \( L \in \mathbb{R}^{N_t \times C} \) for target nodes based on label availability. For each labeled node \( v_i^{\text{train}} \in \mathcal{V}_{t\text{-}train} \), the label is encoded as a one-hot vector:  
\(
L(v_i^{\text{train}}) = \mathbf{e}_{y_i}, y_i \in \{1, \dots, C\},
\)  
where \( \mathbf{e}_{y_i} \in \mathbb{R}^C \) denotes the standard basis vector for class \( y_i \).
For unlabeled target nodes \( v_i^{\text{test}} \in \mathcal{V}_{t\text{-}test} \), we adopt a soft-labeling strategy to reflect uncertainty in class membership. Specifically, we train a heterogeneous GNN \( f_\theta \) on \( \mathcal{G} \), and use its output logits to estimate a class probability distribution via softmax:
\begin{equation}\label{eqn-L_test} 
L(v_i^{\text{test}}) = \text{softmax}(f_\theta(\mathcal{G}, \mathcal{X})[v_i^{\text{test}}]) \in \mathbb{R}^{C}.
\end{equation}

This approach preserves uncertainty and provides a smoother initialization for downstream homophily modeling.
The matrix \( L \) serves as the initial target information for all target nodes.

\noindent\textbf{Target Information Propagation.}  
Target information is propagated to non-target nodes via cross-type edges \( \mathcal{E}_{tn} \) as:
\begin{equation}\label{eqn-P} 
  P = (\mathcal{W} \circ \mathcal{A}_{nt}) L,
\end{equation}
where \( P \in \mathbb{R}^{N_n \times C} \) is the propagated matrix for non-target nodes, \( \mathcal{W} \) is based on the weight of each edge in the dataset, and \( \circ \) denotes the Hadamard product.

To ensure comparability between \( L \) and \( P \), we normalize the information associated with each node as a probability distribution using the \( \ell_1 \)-norm:
\begin{equation}\label{eqn-P_i} 
  P_i^{\prime} = \frac{P_i}{\| P_i \|_1}.
\end{equation}

The normalized target information matrix for non-target nodes $ P^{\prime} \in \mathbb{R}^{N_n \times C} $ is concatenated with the target information matrix $ L \in \mathbb{R}^{N_t \times C} $ to form the final target information matrix:
\begin{equation}\label{eqn-H_cat} 
    \mathcal{H} = \text{concat}\left( L, P^{\prime} \right) \in \mathbb{R}^{N \times C},
\end{equation}
where $ N = N_t + N_n $ represents the total number of nodes. 

Based on the preceding formulation, we formally define the Cross-Type Homophily Ratio (CHR) in the context of HGs.

\begin{figure}[t]
  \centering
  
  \begin{subfigure}{0.48\linewidth}
		\centering
		\includegraphics[width=1\linewidth]{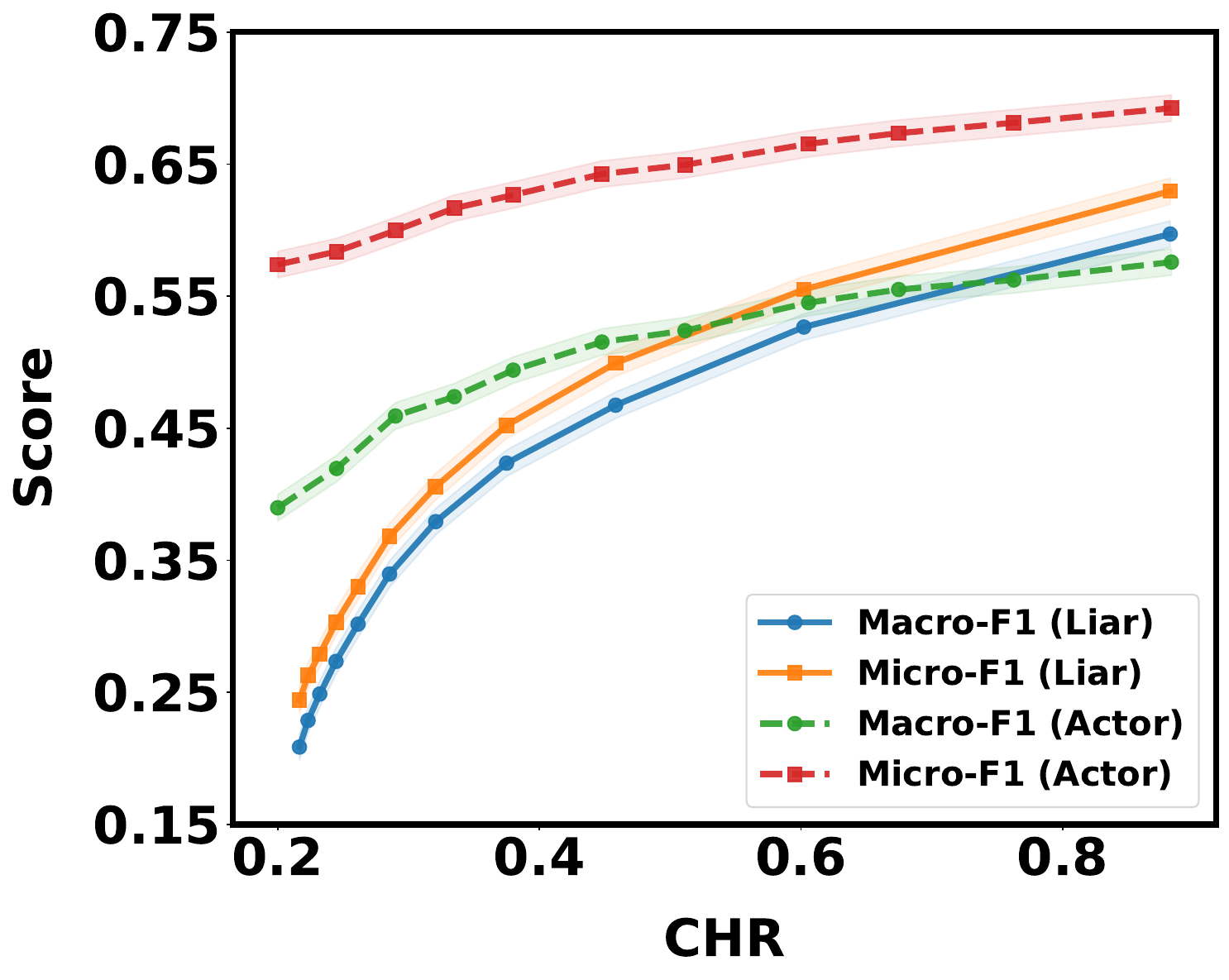}
		\caption{Liar \& Actor}
		\label{fig-ob_CHR_F1_Liar_Actor}
	\end{subfigure}
  \hfill
  \begin{subfigure}{0.48\linewidth}
		\centering
		\includegraphics[width=1\linewidth]{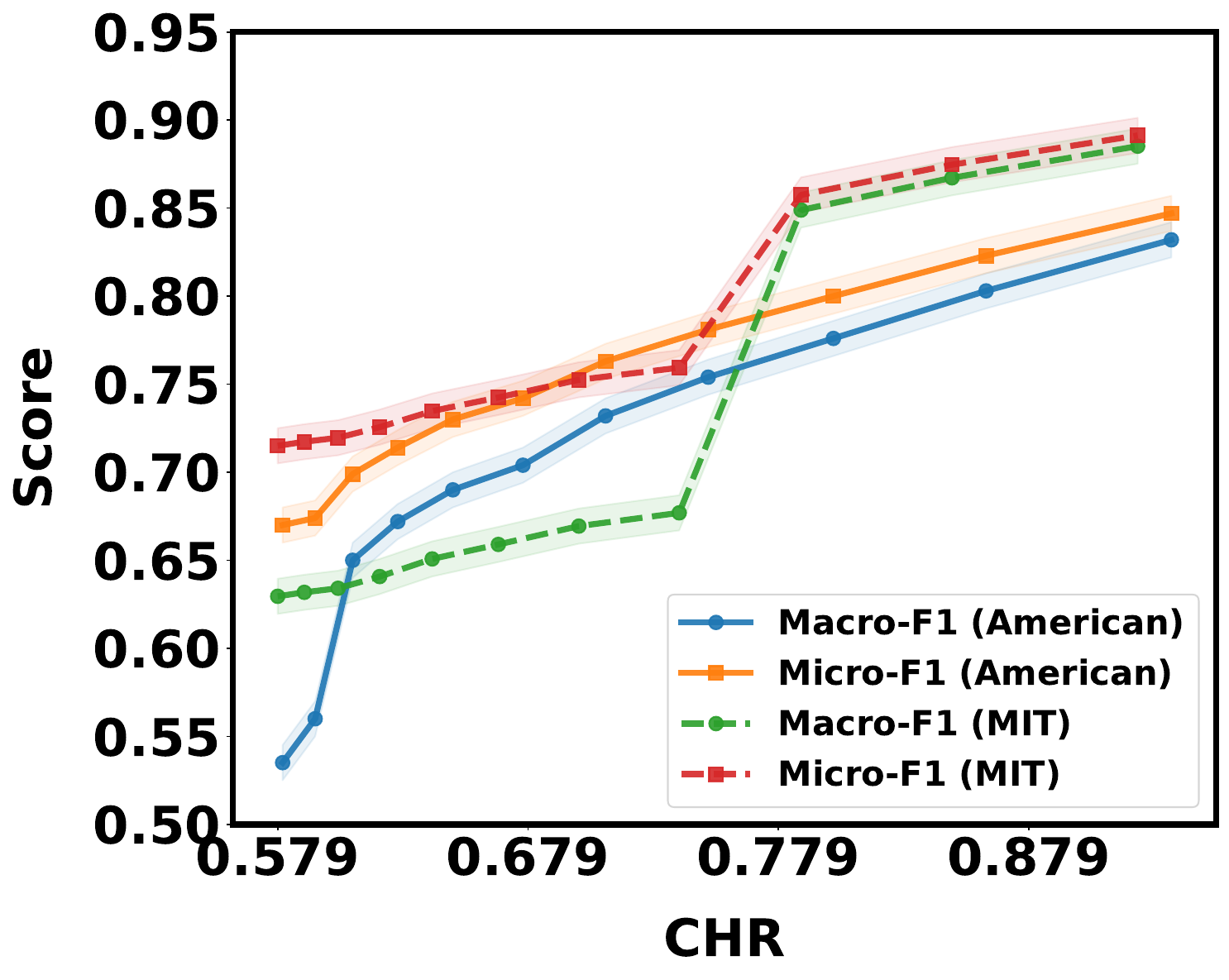}
		\caption{American \& MIT}
		\label{fig-ob_CHR_F1_American_MIT}
	\end{subfigure}
  \caption{Impact of CHR on HGNN Performance.}
  \label{fig:ob_CHR_F1_hr}
\end{figure}

\begin{definition}[Cross-Type Homophily Ratio]
	\label{def-4}
    The Cross-Type Homophily Ratio (CHR) quantifies the homophily between target nodes and non-target nodes in a heterogeneous graph \(\mathcal{G}\). In this work, we use the target information matrix \( \mathcal{H} \in \mathbb{R}^{N \times C} \) to measure this cross-type homophily indicator. The CHR in a heterogeneous graph is calculated as:
\begin{equation}\label{eqn-CHR} 
CHR(\mathcal{G}) = \frac{1}{|\mathcal{E}_{\text{tn}}|} \sum\nolimits_{(v_i, v_j) \in \mathcal{E}_{\text{tn}}} \text{sim}(v_i, v_j),
\end{equation}
where \( \text{sim}(v_i, v_j) \) is represented as:
\begin{equation}\label{eqn-sim} 
\text{sim}(v_i, v_j) = S_{ij} = \sum\nolimits_{k=1}^{C} \mathcal{H}(v_{i})_k \mathcal{H}(v_{j})_k,
\end{equation}
here, \( S_{ij} \) is the similarity score between the target information of two nodes based on their respective categories.
\end{definition}

\noindent\textbf{Empirical Study of CHR and HGNN Performance.}
To assess the relationship between CHR and model effectiveness, we conduct a series of experiments on HGs with varying CHR levels, generated by systematically modifying graph structure. Each HG is split into training, validation, and test sets under a fixed ratio, and a unified HGNN model is trained across all instances. As shown in Fig.\ref{fig:ob_CHR_F1_hr}, the results demonstrate a strong positive correlation between CHR and node classification performance. This finding indicates that enhancing homophily between target and non-target nodes facilitates the HGNN’s ability to capture label consistency, thereby substantially boosting classification effectiveness.

\noindent\textbf{Theoretical Analysis of CHR and HGNN Generalization.}  
We theoretically investigate the effect of cross-type homophily on the generalization ability of HGNNs by employing a complexity measure (CM) framework. Specifically, we adopt the representation consistency approach~\cite{natekar2020representation}, which utilizes the Davies-Bouldin index~\cite{davies1979cluster} to quantify model complexity.
Formally, the complexity measure is defined as a function \( M: \{H, S\} \to \mathbb{R}^+ \), where \( H \) denotes the class of meta-path-free HGNNs with different parameters, and \( S \) represents heterogeneous graphs with varying CHR levels. For a given layer and dataset, intra-class and inter-class distances are computed as:
\begin{equation}\label{eqn-si}
S_{i} = \left(\frac{1}{n_{k}} \sum\nolimits_{t=1}^{n_{i}} |O_{i}^{(t)} - \mu_{O_{i}}|^{p} \right)^{\frac{1}{p}}, \quad i = 1, \dots, k
\end{equation}
\begin{equation}\label{eqn-mij}
M_{i,j} = \|\mu_{O_{i}} - \mu_{O_{j}}\|_{p}, \quad i, j = 1, \dots, k
\end{equation}
where $ i, j $ are category indices, \( O_i^{(t)} \) is the representation of the \( t \)-th instance in class \( i \), and \( \mu_{O_i} \) is its centroid. The Davies-Bouldin index is then given by:
\begin{equation}
C = \frac{1}{k} \sum\nolimits_{i=1}^{k} \max_{j \ne i} \frac{S_i + S_j}{M_{i,j}}. \label{eqn-dbindex}
\end{equation}

Lower values of \( C \) indicate better separation and thus stronger generalization. Setting \( p=2 \), the metric reflects the ratio of intra-class variance to inter-class separation. Under this setting, we state the following theorem:

\begin{theorem}
Let $\mathcal{G} = (\mathcal{V}, \mathcal{E}, \phi, \psi)$ denote an HG. We consider a binary classification problem involving node classification with an HGNN across the entire graph $\mathcal{G}$. Using target information as the classification criterion, we model the distribution of non-target nodes as a spatial interpolation of the target node distributions. When Cross-Type Homophily reaches a maximum value of 1, the generalization capacity of the HGNN achieves its theoretical upper bound.
\end{theorem}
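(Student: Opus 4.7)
The plan is to reduce the theorem to a monotonicity statement about the Davies-Bouldin complexity measure $C$ defined in equation~(\ref{eqn-dbindex}): since the representation-consistency framework relates smaller $C$ to a tighter generalization bound, it suffices to show that $C$ attains its infimum precisely when $CHR(\mathcal{G})=1$. For the binary setting ($k=2$), the DB index collapses to $C=(S_1+S_2)/M_{1,2}$, so I would track how CHR controls the numerator (intra-class dispersion) and denominator (inter-class centroid separation) through the propagation equations~(\ref{eqn-P})--(\ref{eqn-H_cat}).

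First I would fix notation: for every target node $v\in\mathcal{V}_t$, $\mathcal{H}(v)=\mathbf{e}_{y_v}$ is a one-hot vector, and for every non-target node $v\in\mathcal{V}_n$, $\mathcal{H}(v)=P'_v$ is the $\ell_1$-normalized aggregation of neighboring target distributions along $\mathcal{E}_{tn}$. Invoking the spatial-interpolation hypothesis of the theorem, the learned representation $O_v$ of a non-target node can be written as a convex combination of the class centroids $\{\mu_{O_i}\}_{i=1}^{2}$ with weights $P'_v$, i.e.\ $O_v=\sum_{i=1}^{2} P'_v(i)\,\mu_{O_i}$ (up to an isometric embedding). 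This reduces the geometry of $\{O_v\}$ entirely to properties of $\mathcal{H}$.

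Next I would characterize $CHR=1$. From equations~(\ref{eqn-CHR})--(\ref{eqn-sim}), $CHR(\mathcal{G})=1$ forces $S_{ij}=\sum_k \mathcal{H}(v_i)_k\mathcal{H}(v_j)_k=1$ on every cross-type edge. Because each $\mathcal{H}(v)$ is a probability vector, the Cauchy--Schwarz inequality forces both endpoints to concentrate on the same class; hence every non-target node only touches target nodes of a single class, and $P'_v=\mathbf{e}_{c(v)}$ for some $c(v)\in\{1,2\}$. Substituting back, $O_v=\mu_{O_{c(v)}}$, so each non-target node sits exactly on a target-class centroid. Consequently $S_i$ reduces to the intrinsic intra-class spread of target nodes alone, with no extra dispersion injected by non-target embeddings, while the centroids $\mu_{O_1},\mu_{O_2}$ are not pulled towards each other. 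Any configuration with $CHR<1$ admits at least one mixed cross-type edge whose $P'_v$ puts positive mass on both classes, which (i) shifts $O_v$ off the correct centroid and inflates $S_{c(v)}$, and (ii) drags $\mu_{O_1}$ and $\mu_{O_2}$ towards each other, shrinking $M_{1,2}$. A direct comparison argument, or an explicit derivative computation on the Dirichlet-like objective $C(\text{CHR})$, then gives $C(1)\le C(\alpha)$ for every $\alpha\in[0,1]$, so $C$ is minimized and the generalization bound is tightest at $CHR=1$.

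The main obstacle will be making the \emph{spatial interpolation} assumption quantitatively rigorous: it is what allows one to identify the abstract representations $O_v$ (on which $C$ is computed) with the target information vectors $\mathcal{H}(v)$ produced by the propagation rule. In an arbitrary HGNN the aggregation function need not realize this identification exactly, so I would either promote it to an explicit Lipschitz/smoothness premise on the cross-type aggregator (controlling $\|O_v-\sum_i P'_v(i)\mu_{O_i}\|$ by a term that vanishes as $CHR\to 1$), or work in the idealized regime in which the HGNN faithfully implements label propagation, where the identification is exact. A secondary but minor issue is handling edge cases where $M_{1,2}=0$, which I would exclude by assuming non-degenerate class centroids, a standard premise in DB-based complexity analyses.
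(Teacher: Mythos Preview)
Your route diverges substantially from the paper's. The paper does not work with the propagated target-information matrix $\mathcal{H}$ or the similarity scores at all; instead it analyses a single heterogeneous graph-convolution layer in expectation. It introduces two abstract probability parameters, same-type homophily $Q_s$ and cross-type homophily $Q_c$, and writes the expected post-aggregation class centroids as $\mu_{O_0}=\mathbf{W}\bigl((Q_s+Q_c)\mu_{X_0}+(2-Q_s-Q_c)\mu_{X_1}\bigr)$ and symmetrically for $\mu_{O_1}$. From these it computes $M_{0,1}^2$ and the intra-class variances $T_0^2,T_1^2$ explicitly, uses a quadratic-form inequality to bound $T_0^2+T_1^2$ from below, and arrives at a closed-form lower bound $C^{\mathrm{lower}}=C_0/(2Q_s+2Q_c-2)^2$ with $C_0>0$ independent of $Q_c$. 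The conclusion then follows by differentiating $C^{\mathrm{lower}}$ in $Q_c$. Note that the paper also works with a \emph{squared} variant of the DB index (equation~(\ref{eqn-consistency})), not the linear form you reduce to.

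Two concrete gaps make your plan hard to complete as stated. First, you read the ``spatial interpolation'' hypothesis as $O_v=\sum_i P'_v(i)\,\mu_{O_i}$, i.e.\ a per-node identity linking output representations to the propagation weights in $\mathcal{H}$; the paper uses it only as a statement about the \emph{mean} of non-target input features, $\mu X_{nt}=\lambda\mu X_0+(1-\lambda)\mu X_1$, feeding it into the layer-wise expectation. Your identity is strictly stronger and is precisely the ``main obstacle'' you flag---without it the Cauchy--Schwarz characterisation of $CHR=1$ tells you about $\mathcal{H}$, not about the representations $O_v$ that $C$ is actually computed on. Second, the paper never claims to minimise $C$ itself; it only shows that a \emph{lower bound} on $C$ is smallest at $Q_c=1$. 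Your ``direct comparison argument'' would have to control $C$ outright, which is harder and is not what the theorem (as proved in the paper) delivers. If you want to align with the paper, replace the geometric argument on $\mathcal{H}$ by the expectation calculation on one aggregation layer parameterised by $(Q_s,Q_c)$, and aim for the closed-form lower bound rather than a minimiser of $C$.
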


\begin{proof}
To establish this theorem, we introduce the parameter $ Q_c $ to estimate the lower bound of the complexity measure $ C $ within a general heterogeneous graph convolutional layer. Employing principles from Consistency of Representations~\cite{natekar2020representation} and Fisher Discriminant Analysis~\cite{fisher1936use}, we utilize the ratio of intra-class variance to inter-class variance as a primary metric, reformulating the Consistency of Representations as a squared expression to enable rigorous theoretical validation:
\begin{equation}\label{eqn-consistency}
C = \frac{1}{k} \sum_{t=0}^{k-1} \max_{t \neq s} \frac{T_t^2 + T_s^2}{M_{t,s}^2}.
\end{equation}

Our analysis focuses on node classification across HG by leveraging target information in a binary classification setting. For non-target nodes $ X_{nt} $, their representations are expressed as a spatial mixture of target node distributions $ X_0 $ and $ X_1 $:
\begin{equation}\label{eqn-combination}
\mu X_{nt} = \lambda \mu X_0 + (1-\lambda) \mu X_1.
\end{equation}

To account for homophily in nodes with mixed connections, we define Same-Type Homophily $ Q_s $ and Cross-Type Homophily $ Q_c $. The representations of target nodes $ \mu O_0 $ and $ \mu O_1 $ are derive as follows:

\begin{equation}\label{eqn-mu0}
\begin{aligned}
\mu_{O_0} &= \mathbb{E}\left(\mathbf{W} \sum_{j \in \mathcal{N}_{r}(v_{i})} \frac{1}{|\mathcal{N}_{r}(v_{i})|} \mathbf{X}^{(j)}\right) \\
&= \mathbf{W} \left(Q_s \mu_{\mathbf{X}_{0}} + (1-Q_s) \mu_{\mathbf{X}_{1}} + Q_c \mu_{\mathbf{X}_{0}} + (1-Q_c) \mu_{\mathbf{X}_{1}}\right) \\
&= \mathbf{W} \left((Q_s + Q_c) \mu_{\mathbf{X}_{0}} + (2 - Q_s - Q_c) \mu_{\mathbf{X}_{1}}\right),
\end{aligned}
\end{equation}

\begin{equation}\label{eqn-mu1}
\begin{aligned}
\mu_{O_1} &= \mathbb{E}\left(\mathbf{W} \sum_{j \in \mathcal{N}_{r}(v_{i})} \frac{1}{|\mathcal{N}_{r}(v_{i})|} \mathbf{X}^{(j)}\right) \\
&= \mathbf{W} \left(Q_s \mu_{\mathbf{X}_{1}} + (1-Q_s) \mu_{\mathbf{X}_{0}} + Q_c \mu_{\mathbf{X}_{1}} + (1-Q_c) \mu_{\mathbf{X}_{0}}\right) \\
&= \mathbf{W} \left((Q_s + Q_c) \mu_{\mathbf{X}_{1}} + (2 - Q_s - Q_c) \mu_{\mathbf{X}_{0}}\right).
\end{aligned}
\end{equation}

The inter-class distance $ M_{0,1} $ can be computed as:

\begin{equation}\label{eqn-m01}
\begin{aligned}
M_{0,1}^{2} &= \|\mu_{O_{0}} - \mu_{O_{1}}\|^{2} \\
&= \left\|\mathbf{W} \left((2Q_s + 2Q_c - 2) \mu_{\mathbf{X}_{0}} \right) + \mathbf{W} \left((2 - 2Q_s - 2Q_c) \mu_{\mathbf{X}_{1}} \right)\right\|^{2} \\
&= \left\|\left(2Q_s + 2Q_c - 2\right) \mathbf{W} \left(\mu_{\mathbf{X}_{0}} - \mu_{\mathbf{X}_{1}}\right)\right\|^{2}.
\end{aligned}
\end{equation}

Applying Jensen's Inequality yields:

\begin{equation}\label{eqn-jensen}
M_{0,1}^{2} \leq \left(2Q_s + 2Q_c - 2\right)^{2} \cdot \left\|\mathbf{W} \left(\mu_{\mathbf{X}_{0}} - \mu_{\mathbf{X}_{1}}\right)\right\|^{2}.
\end{equation}

The intra-class variances for classes 0 and 1 are computed as:

\begin{equation}\label{eqn-s0}
\begin{aligned}
T_{0}^{2} &= \mathbb{E}\left(\langle O_{0}^{(j)} - \mu_{O_{0}}, O_{0}^{(j)} - \mu_{O_{0}} \rangle \right) \\
&= \mathbb{E}\left(\left(Q_s + Q_c \right)^{2} \cdot (X_{0} - \mu_{X_{0}})^{T} \cdot \mathbf{W}^{T} \mathbf{W} \cdot (X_{0} - \mu_{X_{0}})\right) \\
&+ \mathbb{E}\left(\left(2 - Q_s - Q_c\right)^{2} \cdot (X_{1} - \mu_{X_{1}})^{T} \cdot \mathbf{W}^{T} \mathbf{W} \cdot (X_{1} - \mu_{X_{1}})\right).
\end{aligned}
\end{equation}

We introduce the substitutions $ F = (Q_s + Q_c)\mathbf{W} $, $ G = (2 - Q_s - Q_c)\mathbf{W} $, $ \Delta X_0 = X_0 - \mu X_0 $, and $ \Delta X_1 = X_1 - \mu X_1 $, allowing us to rewrite $ T_0^2 $ as:

\begin{equation}\label{eqn-s0-rewrite}
T_0^2 = \mathbb{E}\left((\Delta X_0)^T \cdot F^T F \cdot \Delta X_0 \right) + \mathbb{E}\left((\Delta X_1)^T \cdot G^T G \cdot \Delta X_1 \right).
\end{equation}

Similarly, $ T_1^2 $ becomes:

\begin{equation}\label{eqn-s1}
T_1^2 = \mathbb{E}\left((\Delta X_0)^T \cdot G^T G \cdot \Delta X_0 \right) + \mathbb{E}\left((\Delta X_1)^T \cdot F^T F \cdot \Delta X_1 \right).
\end{equation}

Utilizing the inequality $ x^T \cdot (F^T F + G^T G) \cdot x \geq x^T \cdot \left(\frac{1}{2} \cdot (F + G)^T (F + G)\right) \cdot x $, where equality holds if $ F = G $ and $ F + G = 2\mathbf{W} $, we obtain:

\begin{equation}\label{eqn-sum-s0-s1}
\begin{aligned}
T_0^2 + T_1^2 &\geq \frac{1}{2} \mathbb{E}\left[\Delta X_0^T \cdot \left(2\mathbf{W}\right)^T \left(2\mathbf{W}\right) \cdot \Delta X_0\right] \\
&+ \frac{1}{2} \mathbb{E}\left[\Delta X_1^T \cdot \left(2\mathbf{W}\right)^T \left(2\mathbf{W}\right) \cdot \Delta X_1\right].
\end{aligned}
\end{equation}

Therefore, the lower bound $ C_{\text{lower}} $ of the complexity measure $ C $ can be computed as:

\begin{equation}\label{eqn-clower-apx}
\begin{aligned}
C &= \frac{T_{0}^{2} + T_{1}^{2}}{M_{0,1}^{2}}  \\
  &\geq 2\cdot \frac{ \mathbb{E}\left[\Delta X_{0}^{T} \cdot \mathbf{W}\mathbf{W}^{T} \cdot \Delta X_{0}\right] +  \mathbb{E}\left[\Delta X_{1}^{T} \cdot \mathbf{W}\mathbf{W}^{T} \cdot \Delta X_{1}\right]}{\left(2Q_s + 2Q_c - 2\right)^{2} \cdot \left\|\mathbf{W}\left(\mu_{\mathbf{X}_{0}} - \mu_{\mathbf{X}_{1}}\right)\right\|^{2}}.
\end{aligned}
\end{equation}

The components independent of $ Q_c $ in both the numerator and denominator form a positive constant $ C_0 $:

\begin{equation}\label{eqn-c0}
C_{0} = 2\cdot \frac{ \mathbb{E}\left[\Delta X_{0}^{T} \cdot \mathbf{W}\mathbf{W}^{T} \cdot \Delta X_{0}\right] +  \mathbb{E}\left[\Delta X_{1}^{T} \cdot \mathbf{W}\mathbf{W}^{T} \cdot \Delta X_{1}\right]}{\|\mathbf{W}(\mu_{{\mathbf{X}_{0}}} - \mu_{{\mathbf{X}_{1}}})\|^{2}}.
\end{equation}

Thus, the lower bound $ C_{\text{lower}} $ becomes:

\begin{equation}\label{eqn-clower-final}
C^{\mathrm{lower}} = \frac{C_{0}}{(2Q_{s} + 2Q_{c} - 2)^{2}}.
\end{equation}

The partial derivative of $ C_{\text{lower}} $ with respect to $ Q_c $ is:

\begin{equation}\label{eqn-partial-derivative}
\frac{\partial C^{\mathrm{lower}}}{\partial Q_c} = -\frac{2C_0}{(2Q_s + 2Q_c - 2)^3}.
\end{equation}

As $ Q_c \to 1 $, the lower bound $ C_{\text{lower}} $ attains its minimum, signifying an optimal state of generalization. This result highlights that increasing cross-type homophily within a heterogeneous graph significantly enhances the generalization capability of HGNNs.

\end{proof}

\subsection{Unleashing CHR to Boost HGNNs}
In heterogeneous graph learning, the quality of cross-type connectivity plays a pivotal role in the effective transmission of task-relevant information. While node relations establish the graph’s topological structure, these connections often inadequately reflect the underlying label semantic consistency across heterogeneous node types, resulting in disrupted information flow and constrained model generalization. This phenomenon is quantitatively captured by the Cross-Type Homophily Ratio introduced earlier, which serves as a key metric to evaluate and guide the alignment between structural and label.

We propose a novel graph rewiring approach, named \textit{Cross-Type Homophily Guided Graph Editing (CTHGE)}. Unlike conventional methods that indiscriminately optimize the entire graph topology, CTHGE explicitly adopts CHR as an objective metric. By maximizing CHR, it selectively refines cross-type connections to enhance the propagation of label-consistent information, thereby improving both the representational capacity and generalization performance of HGNNs. Our method comprises two phase:

\begin{figure*}[t]
	\centering
	\includegraphics[width=0.97\linewidth]{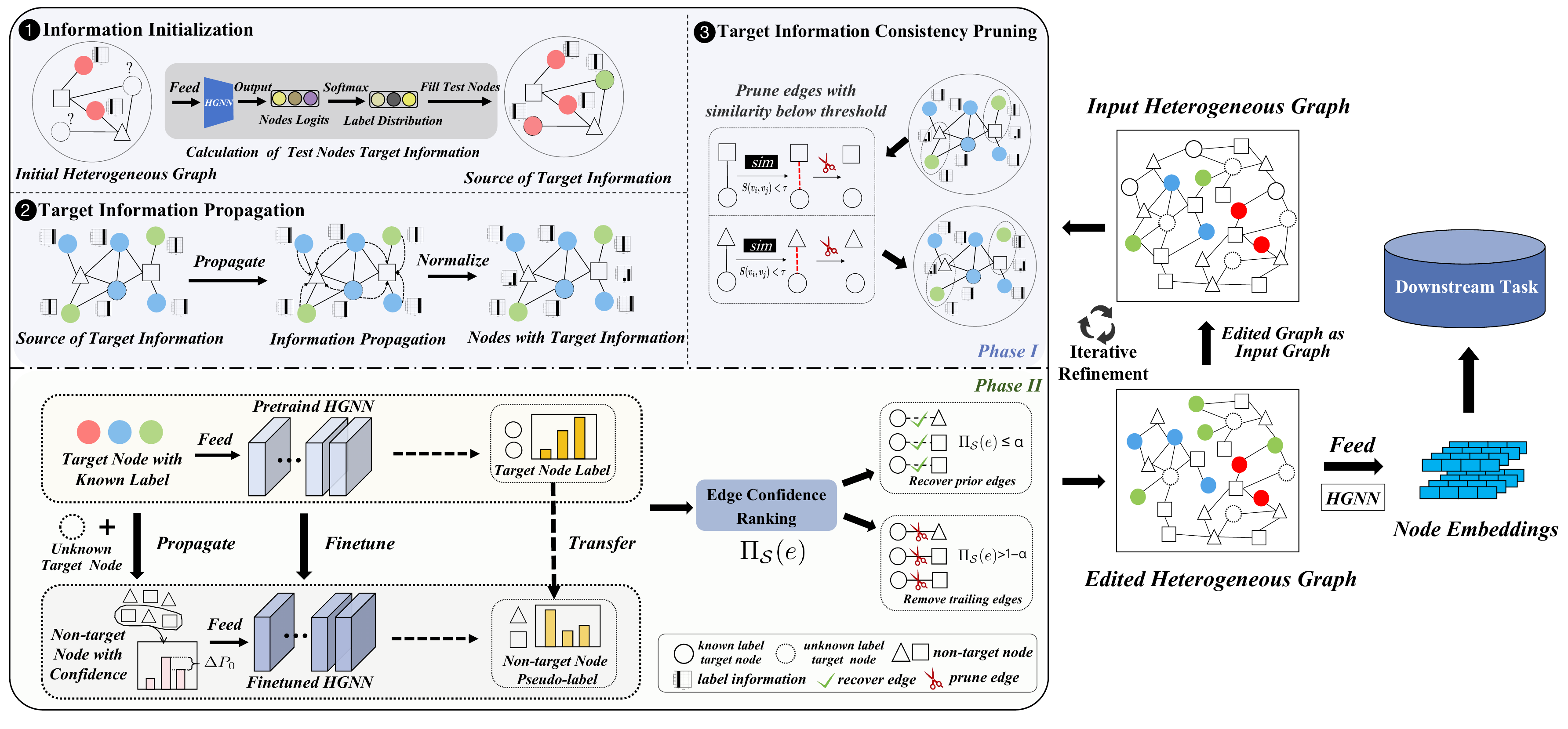}
	\caption{Overview of the CTHGE Method.}
	\label{fig:CTHGE}
\end{figure*}
\noindent\textbf{Phase I: CHR-Increasing Semantic Pruning.} We introduce a task-driven pruning mechanism that selectively removes low-consistency cross-type associations based on their semantic alignment, aiming to improve the overall Cross-Type Homophily Ratio(CHR) of the connectivity set.

Based on Equations~\ref{eqn-L_test} to \ref{eqn-H_cat}, we derive the target-informed distribution matrix $\mathcal{H} \in \mathbb{R}^{|\mathcal{V}| \times C}$ that encodes the label probability vector for each node. The similarity of any cross-type pair $(v_i, v_j) \in \mathcal{E}_{tn}$ is computed as:
\begin{equation}
\mathcal{S}(v_i, v_j) = \text{sim}(v_i, v_j) = \sum_{k=1}^C \mathcal{H}(v_i)_k \cdot \mathcal{H}(v_j)_k.
\end{equation}

To enforce minimal semantic consistency, we introduce a pruning decision function controlled by a tunable threshold $\tau \in [0,1]$:
\begin{equation}
\mathcal{F}_\tau(v_i, v_j) = \mathbb{I}[\mathcal{S}(v_i, v_j) \geq \tau].
\end{equation}

The retained association set is then:
\begin{equation}
\mathcal{E}_{\text{prune}} = \{(v_i, v_j) \in \mathcal{E}_{tn} \mid \mathcal{F}_\tau(v_i, v_j) = 1 \}.
\end{equation}

This pruning operation increases the average label-space similarity across surviving links, thus yielding a higher CHR value:
\begin{equation}
CHR(\mathcal{E}_{\text{prune}}) > CHR(\mathcal{E}_{tn}).
\end{equation}

\paragraph{Optimization Perspective.}
The above rule can be interpreted as the constrained maximization of CHR consistency over the link set. Introducing binary edge selection variables $\mathbf{z} = \{z_{(i,j)}\} \in \{0,1\}^{|\mathcal{E}_{tn}|}$, we obtain:
\begin{equation}
\max_{\mathbf{z}} \frac{1}{\sum z_{(i,j)}} \sum z_{(i,j)} \cdot \mathcal{S}(v_i, v_j), \quad \text{s.t. } z_{(i,j)} = \mathbb{I}[\mathcal{S}(v_i, v_j) \geq \tau].
\end{equation}

This formulation underscores that the pruning step performs a deterministic, non-parametric Cross-Type Homophily Ratio maximization under a semantic similarity constraint. The resulting subgraph is thus not topologically optimal perse, but optimally task-aligned in the label consistency space.

\noindent\textbf{Phase II: Confidence-Aware Refinement of Cross-Type Semantic Connectivity.}  
While the pruning phase improves Cross-Type Homophily Ratio by removing semantically misaligned links, it may also discard potentially connections valuable for task-specific signal propagation. To address this, we propose a refinement mechanism to increase the confidence of edge selection.

\textit{Target-Driven Auxiliary Learning Paradigm Task.} We reformulate cross-type consistency learning by propagating label signals from target-type nodes to their non-target neighbors, enabling pseudo-label supervision and transferring semantic knowledge across types. This paradigm captures homophily not only via observed labels, but also through model-inferred alignment with target semantics.

Formally, let \( L_0 \in \mathbb{R}^{|\mathcal{V}_t| \times C} \) denote the one-hot label matrix of target-type nodes. We define a linear propagation operator over cross-type adjacency \( \mathcal{A}_{nt} \), yielding the soft target distribution matrix for non-target nodes as:
\(
P_0 = \mathcal{A}_{nt} \cdot L_0 \in \mathbb{R}^{|\mathcal{V}_{nt}| \times C}.
\)

To quantify the reliability of each induced distribution, we define a confidence metric based on the dominance of the top class probability:
\begin{equation}
\Delta P_0(v_i) = \max P_0(v_i) - \text{second\_max}(P_0(v_i)).
\end{equation}

This value reflects the entropy margin and serves as a proxy for pseudo-labeling certainty. We designate a non-target node \( v_i \in \mathcal{V}_{nt} \) as confident if:
\begin{equation}
\Delta P_0(v_i) > \left| \{ v_j^{\text{test}} \mid (v_i, v_j^{\text{test}}) \in \mathcal{E}_{tn} \} \right|,
\end{equation}
and assign its class label as:
\begin{equation}
P_{\text{cl}}(v_i) = \arg\max P_0(v_i).
\end{equation}

We construct a two-stage transfer learning procedure, initializing the model \( f_\theta^{\text{pre}} \) on the primary supervised objective over labeled target nodes:
\begin{equation}
\mathcal{L}_{\text{pre}}(\theta) = - \sum_{v_i \in \mathcal{V}_t^{\text{train}}} L(v_i) \cdot \log f_\theta^{\text{pre}}(v_i).
\end{equation}

We then fine-tune the model on the confident subset of non-target nodes using their pseudo-labels:
\begin{equation}
\mathcal{L}_{\text{fine}}(\theta) = - \sum_{v_i \in \mathcal{V}_{nt}^{ct}} P_{\text{cl}}(v_i) \cdot \log f_\theta^{\text{fine}}(v_i).
\end{equation}

This paradigm enables knowledge transfer across semantic types by propagating label supervision, thereby enhancing the model’s inductive bias toward Cross-Type Homophily Ratio-consistent message passing.

Leveraging pseudo-labels from this process, we perform semantic similarity reconfiguration to refine cross-type edge selection. Specifically, we adopt a confidence-ranked edge evaluation scheme, in which the soft similarity between a non-target node \(v_i\) and a target node \(v_j\) is computed using class-wise inner products of calibrated logits and ground-truth labels:

\begin{equation}
\mathcal{S}^{\text{soft}}(v_i, v_j) = \sum_{k=1}^C \text{softmax}(f_\theta^{\text{fine}}(v_i))_k \cdot L(v_j)_k.
\end{equation}

We rank all cross-type edges \( e = (v_i, v_j) \in \mathcal{E}_{tn} \) based on their semantic alignment percentile:
\begin{equation}
\Pi_{\mathcal{S}}(e) = \frac{1}{|\mathcal{E}_{tn}|} \cdot \left| \left\{ e' \in \mathcal{E}_{tn} \mid \mathcal{S}^{\text{soft}}(e') > \mathcal{S}^{\text{soft}}(e) \right\} \right|.
\end{equation}

Given a threshold ratio \( \alpha\), we identify candidate edges for recovery and removal as:
\begin{align}
\mathcal{E}_{\text{rec}} &= \{ e \in \mathcal{E}_{\text{cand}} \mid \Pi_{\mathcal{S}}(e) \leq \alpha \} \\
\mathcal{E}_{\text{rem}} &= \{ e \in \mathcal{E}_{\text{prune}} \mid \Pi_{\mathcal{S}}(e) > 1 - \alpha \}.
\end{align}

The final edge set after semantic refinement is computed as:
\begin{equation}
\mathcal{E}_{\text{final}} = (\mathcal{E}_{\text{prune}} \setminus \mathcal{E}_{\text{rem}}) \cup \mathcal{E}_{\text{rec}},  \quad \mathcal{G} = (\mathcal{V}, \mathcal{E}_{\text{final}}).
\end{equation}

This procedure approximates the following structured subset selection problem:
\begin{equation}
\max_{\mathbf{z}} \sum_{e} z_e \cdot \mathcal{S}^{\text{soft}}(e) \quad \text{s.t. } \sum_{e} z_e = (1-\alpha) |\mathcal{E}_{\text{prune}}| + \alpha |\mathcal{E}_{\text{cand}}|.
\end{equation}

Our ranking-based selection mechanism ensures a balanced structural update aligned with semantic consistency as reflected in model predictions.

\textit{Iterative Logits Refinement.} To further stabilize semantic consistency and guard against local estimation noise, we incorporate a multi-round edge reconfiguration mechanism. At each iteration, semantic scores and pseudo-labels are recalibrated based on updated model predictions. A gap factor \( \gamma \) and base offset \( \mathcal{O} \) determine the proportion of edges subject to refinement in each round. This iterative process amplifies the influence of high-confidence semantic cues and allows the system to progressively converge toward a target label aligned connectivity manifold.

\vspace{5pt}  
\subsection{Complexity Analysis}

The proposed CTHGE consists of two phases with distinct computational demands. Phase I involves target information initialization, propagation, and pruning, with complexity dominated by \(O(N_t C + N_t F^2 L + N_n N_t C + |\mathcal{E}|)\), where \(N_t\), \(N_n\), \(F\), \(L\), and \(|\mathcal{E}|\) denote the numbers of target nodes, non-target nodes, hidden units, network layers, and edges, respectively. Phase II centers on HGNN pretraining and fine-tuning over training nodes \(N\), semantic similarity scoring for cross-type edges, sorting, and iterative logits refinement. Its complexity can be approximated by \(O\bigl  (N F^2 L + |\mathcal{E}_{tn}| C + |\mathcal{E}_{tn}| \log |\mathcal{E}_{tn}|\bigr)\), where \(|\mathcal{E}_{tn}|\) the number of cross-type edges. The computational bottleneck lies primarily in HGNN training and information propagation steps, and scale with iteration count.
\vspace{5pt}  
\section{Experimental Evaluation}
\begin{table}[t]
\centering
\caption{Statistics of datasets.}
\label{tab:datasets}    
% \vspace{0.5em} 
\renewcommand{\arraystretch}{1.15}
\resizebox{1\linewidth}{!}{%
\begin{tabular}{lcccccc}
\toprule
% \hline
\textbf{Dataset} & \textbf{\#Nodes} & \textbf{\#Node Types} & \textbf{\#Edges} & \textbf{\#Edge Types} & \textbf{Target} & \textbf{\#Classes} \\ 
\midrule
American  & 9,473  & 7  & 465,557  & 13  & person  & 2 \\ 
Liar  & 14,395 & 4 & 45,358 & 6  & news & 6 \\ 
Actor & 16,255 & 3 & 76,118 & 18 & star & 7 \\ 
Amherst  & 3,422  & 7  & 190,277  & 13  & person  & 2 \\ 
MIT  & 9,274  & 7  & 532,102  & 13  & person  & 2 \\  
\bottomrule
\end{tabular}}
\vspace{-1em} 
\end{table}

\subsection{Experimental Setup}
\noindent\textbf{Datasets.} 
% We evaluated our method on five real-world heterogeneous graph datasets, including \texttt{Liar}, \texttt{American}, \texttt{Amherst}, \texttt{MIT}, and \texttt{Actor}. The details can be found in Table~\ref{tab:datasets}.
We evaluated our method on five real-world HG datasets, as detailed in Table~\ref{tab:datasets}:
\texttt{Liar\footnote{\url{https://huggingface.co/datasets/liar}}}~\cite{wang2017liar} is a dataset for Fake News Detection, aiming to predict news node categories.
% \texttt{American, Amherst, and MIT} are datasets from the FB100\footnote{\url{https://archive.org/details/oxford-2005-facebook-matrix}} collection~\cite{traud2012social}, representing Facebook networks from 100 U.S. colleges, used to classify person node genders.
\texttt{American, Amherst, and MIT} are three datasets selected from the FB100\footnote{\url{https://archive.org/details/oxford-2005-facebook-matrix}}  collection~\cite{traud2012social}, representing heterogeneous networks of Facebook ``friendship'' structures across 100 American colleges and universities, captured at a single point in time.
\texttt{Actor\footnote{\url{https://lfs.aminer.cn/lab-datasets/soinf/}}}~\cite{tang2009social} is a film-related network, focusing on predicting actor node categories.

\noindent\textbf{Baseline Methods.} 
We evaluate our method against 13 baselines, including 9 graph neural networks and 4 heterogeneous graph structure learning approaches. The 9 GNNs comprise 4 homogeneous and 5 heterogeneous models. \textbf{Homogeneous GNNs} include: GCN~\cite{kipf2016semi}, a seminal graph convolutional network; GAT~\cite{velivckovic2017graph}, leveraging attention for neighbor aggregation; H2GCN~\cite{zhu2020beyond}, addressing heterophily via higher-order neighborhoods; and LINKX~\cite{lim2021large}, which decouples structural and feature transformations.\textbf{ Heterogeneous GNNs} include: RGCN~\cite{schlichtkrull2018modeling}, assigning relation-specific parameters; HAN~\cite{wang2019heterogeneous}, utilizing hierarchical attention along meta-paths; SHGN~\cite{lv2021we}, applying relation-aware attention with normalization; HINormer~\cite{mao2023hinormer}, combining local and relation encoders; and PSHGCN~\cite{he2024spectral}, employing spectral graph filters for enhanced classification. \textbf{Graph Structure learning and Rewiring methods} consist of LDS~\cite{franceschi2019learning} and IDGL~\cite{chen2020iterative}, which iteratively optimize graph topology via task-driven bilevel optimization; HGSL~\cite{zhao2021heterogeneous}, integrating attribute and meta-path information for relation learning; and HDHGR~\cite{guo2023homophily}, a homophily-oriented rewiring strategy enhancing HGNN performance.

\noindent\textbf{Implementation Setup.} Datasets are split into training, validation, and testing sets with a 0.6/0.2/0.2 ratio, and models are trained for 400 epochs. The threshold \(\tau\) is searched within [0, 1] at intervals of 0.05. For Phase II, the number of fine-tuning epochs is set to 200, with \( \alpha \) in \([5\%, 10\%, 15\%]\), and gap factor \( \gamma = 0.1\). The base offset \( \mathcal{O} \) is randomly set from [0.03, 0.06, 0.09]. Pruning thresholds are iteratively set with intervals of \( \gamma \), and the process runs for 3 iterations. All experiments are conducted on two 32GB V100 GPUs. For the baseline method, we follow the settings from HDHGR~\cite{guo2023homophily}. PSHGCN~\cite{he2024spectral} follows the original settings. The Adam optimizer with a learning rate of \(5e-4\) and weight decay of \(1e-4\) is used.

\noindent\textbf{Evaluation Metrics.} We performed node classification tasks on HGNNs, using Macro-F1 and Micro-F1 scores as evaluation metrics. We calculated the average relative improvement (ARI) for all models on each dataset as an additional metric.
\begin{equation}\label{eqn-ARI}
\mathrm{ARI}=\frac{1}{|\mathcal{M}|}\sum_{m_i\in\mathcal{M}}\left(\frac{\mathrm{ACC}(m_i(\hat{\mathcal{G}}))-\mathrm{ACC}(m_i(\mathcal{G}))}{\mathrm{ACC}(m_i(\mathcal{G}))}\right).
\end{equation}
\vspace{-12pt}  

\begin{table*}[ht]\large
\renewcommand{\arraystretch}{1.25405}
\centering
\caption{Node classification results on HG datasets.The bold numbers indicate that our method improves the HGNN model.}
% \vspace{-8pt}  
\label{tab:main_result}    
\resizebox{0.99\linewidth}{!}{
\begin{tabular}{cccccccccccc}
\toprule
\multirow{2}{*}{{Method}} & \multirow{2}{*}{{/}} & \multicolumn{2}{c}{{Liar}} & \multicolumn{2}{c}{{Actor}} & \multicolumn{2}{c}{{American}} & \multicolumn{2}{c}{{Amherst}} & \multicolumn{2}{c}{{MIT}} \\
\cline{3-12}
 &  & {Macro-F1} & {Micro-F1} & {Macro-F1} & {Micro-F1} & {Macro-F1} & {Micro-F1} & {Macro-F1} & {Micro-F1} & {Macro-F1} & {Micro-F1} \\
\midrule
\multirow{2}{*}{GCN} & origin & 20.75 ± 1.40 & 23.40 ± 1.97 & 48.78 ± 1.30 & 61.85 ± 1.41 & 57.30 ± 7.24 & 66.86 ± 3.35 & 71.33 ± 1.77 & 71.34 ± 1.78 & 65.40 ± 8.02 & 70.93 ± 3.76 \\ 
 & \textbf{CTHGE} & \textbf{{24.84 ± 1.29}} & \textbf{{25.09 ± 0.42}} & \textbf{{53.02 ± 1.01}} & \textbf{{63.57 ± 0.37}} & \textbf{{69.30 ± 0.84}} & \textbf{{70.62 ± 0.71}} & \textbf{{74.38 ± 1.19}} & \textbf{{74.08 ± 1.18}} & \textbf{{75.67 ± 1.15}} & \textbf{{77.71 ± 0.55}} \\
 \hline
\multirow{2}{*}{GAT} &origin	& 23.81 ± 1.66	& 24.43 ± 1.98	& 45.65 ± 1.03	& 61.26 ± 1.02	& 64.19 ± 0.30	& 68.05 ± 0.37	& 71.73 ± 1.12	& 71.74 ± 1.26	& 73.99 ± 1.42	& 75.16 ± 1.26 \\ 
 &\textbf{CTHGE}	& \textbf{24.52 ± 0.22}	& \textbf{24.74 ± 0.25}	& \textbf{50.81 ± 0.65}	& \textbf{62.01 ± 0.92} 	& \textbf{73.66 ± 1.16}	& \textbf{76.39 ± 1.08}	& \textbf{74.27 ± 0.36}	& \textbf{74.28 ± 0.37}	& \textbf{77.50 ± 1.51}	& \textbf{78.60 ± 1.31} \\ 
 \hline
\multirow{2}{*}{H2GCN} &origin	& 19.84 ± 0.31	& 22.00 ± 1.60	& 51.42 ± 1.21	& 63.13 ± 1.34	& 75.73 ± 0.29	& 77.33 ± 0.49	& 80.74 ± 0.37	& 80.75 ± 0.31	& 78.69 ± 2.13	& 79.48 ± 1.91 \\ 
 &\textbf{CTHGE}	& \textbf{22.03 ± 1.19}	& \textbf{23.19 ± 1.36}	& \textbf{53.20 ± 1.39}	&\textbf{63.49 ± 1.62}	& \textbf{76.52 ± 0.21}	& \textbf{78.33 ± 0.56}	& \textbf{81.89 ± 1.85}	& \textbf{81.90 ± 1.80}	& \textbf{81.55 ± 1.16}	& \textbf{82.55 ± 1.10} \\ 
 \hline
\multirow{2}{*}{LINKX} &origin	& 17.26 ± 1.43	& 19.04 ± 2.74	& 50.60 ± 1.02	& 55.83 ± 1.12	& 74.47 ± 0.45	& 74.81 ± 0.49	& 81.88 ± 1.99	& 81.88 ± 1.98	& 77.06 ± 2.07	& 77.28 ± 2.20 \\ 
&\textbf{CTHGE}	& \textbf{19.47 ± 2.12}	& \textbf{20.13 ± 2.05}	&\textbf{52.81 ± 1.37}	&\textbf{57.77 ± 0.98} 	& \textbf{77.60 ± 0.56}	& \textbf{79.27 ± 0.75}	& \textbf{83.12 ± 1.64}	& \textbf{83.13 ± 1.63}	& \textbf{80.40 ± 0.39}	& \textbf{80.91 ± 0.40} \\ 
\hline
\multirow{2}{*}{RGCN} &origin	& 17.71 ± 0.53	& 22.58 ± 2.10	& 65.62 ± 1.34	& 78.31 ± 1.01	& 52.57 ± 2.77	& 64.03 ± 1.12	& 60.74 ± 3.43	& 60.93 ± 3.22	& 67.17 ± 3.12	& 69.24 ± 2.85 \\ 
&\textbf{CTHGE}	& \textbf{20.34 ± 1.13}	& \textbf{22.95 ± 0.95}	& \textbf{71.36 ± 1.41}	&\textbf{82.48 ± 0.92}	&\textbf{ 55.22 ± 1.90}	& \textbf{63.65 ± 0.52}	& \textbf{62.72 ± 2.82}	& \textbf{62.90 ± 2.65}	& \textbf{69.08 ± 2.05}	&\textbf{ 70.79 ± 2.20} \\ 
\hline
\multirow{2}{*}{HAN} &origin & 26.08 ± 0.01 & 22.47 ± 0.74 & 69.43 ± 0.22 & 56.54 ± 0.94 & 63.95 ± 0.08 & 56.22 ± 0.43 & 61.34 ± 0.28 & 61.16 ± 0.41 & 71.42 ± 0.97 & 69.93 ± 0.85 \\ 
 &\textbf{CTHGE} & \textbf{26.92 ± 0.21} & \textbf{26.52 ± 0.34} & \textbf{82.58 ± 0.39} & \textbf{73.62 ± 0.61} & \textbf{64.59 ± 0.87} & \textbf{57.81 ± 0.79} & \textbf{69.73 ± 5.79} & \textbf{69.69 ± 5.85} & \textbf{72.72 ± 0.99} & \textbf{71.36 ± 1.17} \\
\hline
\multirow{2}{*}{SHGN} &origin	& 21.31 ± 1.44	& 24.35 ± 1.82	& 70.63 ± 0.93	& 78.95 ± 1.18	& 75.17 ± 3.65	& 77.22 ± 2.83	& 79.77 ± 1.69	& 79.85 ± 1.72	& 77.91 ± 2.22	& 79.05 ± 1.99 \\ 
 &\textbf{CTHGE}	& \textbf{24.42 ± 0.06}	& \textbf{25.66 ± 0.51}	& \textbf{73.11 ± 0.85}	& \textbf{81.29 ± 0.74}	& \textbf{78.15 ± 0.59}	& \textbf{79.55 ± 0.49}	& \textbf{82.88 ± 3.56}	& \textbf{82.81 ± 3.50}	& \textbf{81.12 ± 0.41}	& \textbf{82.06 ± 0.56} \\ 
 \hline
\multirow{2}{*}{HINormer} &origin     & 20.29 ± 1.13     & 	23.18 ± 1.70     & 	47.72 ± 1.34	&63.05 ± 0.59	&66.57 ± 1.63	& 70.02 ± 1.54	& 72.09 ± 1.44	& 72.15 ± 1.50	& 73.15 ± 2.26	& 74.48 ± 2.46 \\ 
 & \textbf{CTHGE}	& \textbf{24.60 ± 0.39}	& \textbf{25.10 ± 0.35}	& \textbf{52.87 ± 1.76}	& \textbf{65.71 ± 0.62}	& \textbf{75.62 ± 1.10}	& \textbf{77.53 ± 1.04}	& \textbf{80.90 ± 2.43}	& \textbf{80.92 ± 2.42}	& \textbf{77.26 ± 1.28}	& \textbf{78.54 ± 1.05} \\ 
  \hline
 \multirow{2}{*}{PSHGCN} &origin & 20.42 ± 0.25 & 15.13 ± 0.23 & 76.20 ± 1.28 & 67.16 ± 1.12 & 64.25 ± 0.25 & 56.82 ± 1.48 & 63.88 ± 0.25 & 63.88 ± 0.46 & 65.32 ± 0.98 & 60.77 ± 0.56 \\ 
 &\textbf{CTHGE} & \textbf{42.52 ± 3.12} & \textbf{42.08 ± 3.09} & \textbf{77.58 ± 1.85} & \textbf{68.59 ± 1.92} & \textbf{65.33 ± 0.87} & \textbf{61.87 ± 1.14} & \textbf{65.15 ± 1.51} & \textbf{65.14 ± 1.43} & \textbf{67.64 ± 0.76} & \textbf{64.05 ± 0.52} \\
 \hline
 ARI & {/} & \textbf{23.19\%$\uparrow$} & \textbf{25.67\%$\uparrow$} & \textbf{7.96\%$\uparrow$} & \textbf{5.88\%$\uparrow$} & \textbf{7.36\%$\uparrow$} & \textbf{5.29\%$\uparrow$} & \textbf{5.09\%$\uparrow$} & \textbf{5.04\%$\uparrow$} & \textbf{5.15\%$\uparrow$} & \textbf{4.63\%$\uparrow$} \\

\bottomrule
\end{tabular}
}
% \vspace{-5pt}  
\end{table*}

\begin{table}[!t]
\centering
\caption{Performance comparison of graph structure learning methods.}
% \vspace{-8pt}  
\label{tab:GLS}    
\resizebox{0.99005\linewidth}{!}{%
\begin{tabular}{lcccc}
\toprule
\multirow{2}{*}{{Method}} & \multicolumn{2}{c}{{Liar}} & \multicolumn{2}{c}{{American}}  \\
\cline{2-5}
 & {Macro-F1} & {Micro-F1} & {Macro-F1} & {Micro-F1}  \\
\midrule
Base Model & 20.75 ± 1.40 & 23.40 ± 1.97 & 57.30 ± 7.24 & 66.86 ± 3.35  \\
LDS & 20.95 ± 1.87 & 22.92 ± 1.70 & 58.42 ± 5.98 & 66.92 ± 3.44  \\
IDGL & 21.99 ± 1.71 & 23.02 ± 1.92 & 58.85 ± 4.34 & 67.10 ± 2.20 \\
HGSL & 21.01 ± 0.89 & 23.10 ± 1.41 & 58.93 ± 2.11 & 67.01 ± 1.87 \\
HDHGR & 23.01 ± 0.63 & 24.19 ± 0.81 & 65.72 ± 2.06 & 67.92 ± 0.94 \\
\textbf{{CTHGE}} & \textbf{24.84 ± 1.29} & \textbf{25.09 ± 0.42} & \textbf{69.30 ± 0.84} & \textbf{70.62 ± 0.71}  \\
\bottomrule
\end{tabular}}
% \vspace{-15pt}  
\end{table}

\subsection{Main Results}
We conducted node classification experiments on five HG datasets using nine representative HGNN models to evaluate the effectiveness of our proposed method. As detailed in Table~\ref{tab:main_result}, the CTHGE approach consistently achieves significant performance improvements across all baseline HGNN backbones, with relative gains ranging from 4.63\% to 25.67\%. The CTHGE method is structurally simple and involves less hyperparameter tuning, facilitating its integration with diverse HGNN architectures. The performance improvement observed on HAN and PSHGCN methods demonstrates that our approach is also suitable for metapath-based HGNN methods. In the PSHGCN method, long metapath information is utilized, and our method also brings performance improvement to PSHGCN, indicating that our approach does not negatively impact the performance of methods that rely on long metapaths.

As shown in Fig.~\ref{fig:CHR_comparison}, we analyze changes in Cross-Type Homophily Ratio (CHR) before and after our editing. The results demonstrate that CTHGE effectively increases the CHR, enhancing semantic alignment between cross-type nodes. This improvement promotes more efficient information propagation and feature aggregation, thereby boosting overall HGNN performance. Notably, CTHGE attains the highest average relative improvement (ARI) on the Liar dataset, which exhibits a low initial CHR of only 21.66\%, indicating pronounced heterophily among cross-type edges. Theoretical analysis suggests that structural optimization is particularly crucial in low-homophily settings, as it suppresses noisy or irrelevant connections while reinforcing task-relevant paths. This explains the superior performance of CTHGE on such graphs and highlights its robustness and applicability across heterogeneous structures.

\subsection{Comparison with Graph Structure Learning Methods}
We compare CTHGE with graph structure learning methods, using multi-GCN, an extension of GCN that maps node features from various categories into a unified feature space. As shown in Table~\ref{tab:GLS}, our method consistently outperforms the others, demonstrating its effectiveness. Our approach surpasses the state-of-the-art baselines in supervised structural modification for HGs. Our approach significantly enhances HGNNs performance without increasing graph complexity, thereby facilitating its efficient use in practical applications. This result emphasizes the effectiveness of our approach and highlights the potential of focusing on cross-type edges in HGs as a compelling avenue for future research.

\begin{figure}[!t]
	\centering
    \setlength{\abovecaptionskip}{-0.03cm}
	\setlength{\belowcaptionskip}{0cm}
	\includegraphics[width=0.86\linewidth]{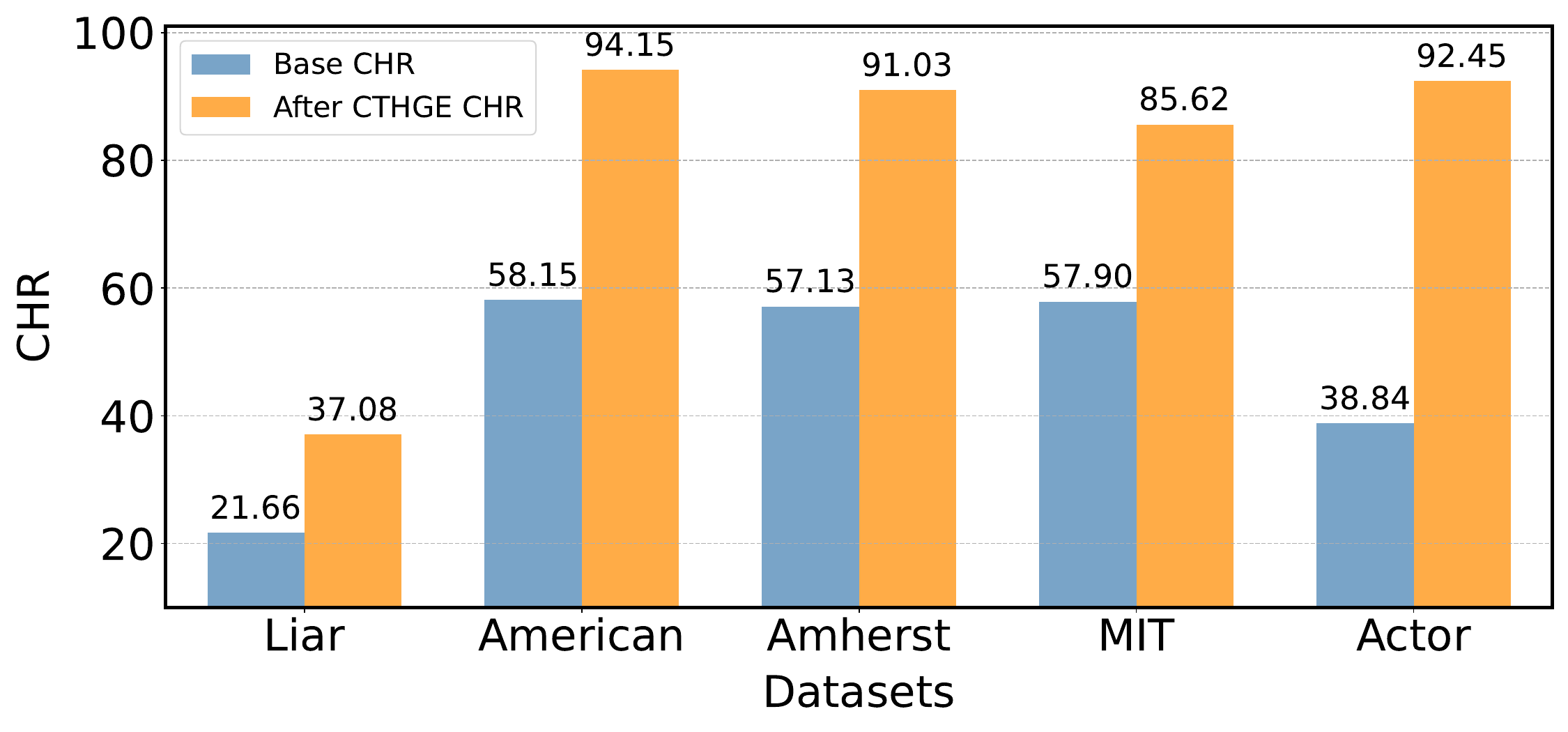}
    \caption{CHR comparison before and after CTHGE. \textit{Base CHR} represents the CHR value of the HG before CTHGE, while \textit{After CTHGE CHR} denotes the CHR value of the HG after applying CTHGE method.}
	\label{fig:CHR_comparison}
    % \vspace{-8pt}  
\end{figure}

\begin{table*}[t]\large
\renewcommand{\arraystretch}{1.08}
\centering
\caption{Ablation study analysis of CTHGE.}
% \vspace{-8pt}  
\label{tab:AbofCTHGE}    
\resizebox{0.99\linewidth}{!}{%
\begin{tabular}{lcccccccccc}
\toprule    
\multirow{2}{*}{{Method}} & \multicolumn{2}{c}{{Liar}}  & \multicolumn{2}{c}{{Actor}} & \multicolumn{2}{c}{{American}}  & \multicolumn{2}{c}{{Amherst}}  & \multicolumn{2}{c}{{MIT}}\\
\cline{2-11}
% \cline{2-3}
% \cline{4-5}
% \cline{6-7}
 & {Macro-F1}    & {Micro-F1}   & {Macro-F1}    & {Micro-F1}   & {Macro-F1}    & {Micro-F1} & {Macro-F1}    & {Micro-F1}    & {Macro-F1}    & {Micro-F1}\\
\midrule
{Base Model} & 20.75 ± 1.40 & 23.40 ± 1.97 & 48.78 ± 1.30 & 61.85 ± 1.41 & 57.30 ± 7.24 & 66.86 ± 3.35 & 71.33 ± 1.77 & 71.34 ± 1.78 & 65.40 ± 8.02 & 70.93 ± 3.76 \\
{RandDropEdge} & 19.30 ± 1.48 & 22.68 ± 1.30 & 48.34 ± 1.32 & 61.93 ± 1.48 & 56.12 ± 7.39 & 66.03 ± 3.63 & 70.54 ± 1.47 & 70.51 ± 1.48 & 64.53 ± 7.36 & 69.85 ± 3.31 \\
{{CTHGE(w/o Phase II)}} & {24.01 ± 1.76} & {24.82 ± 0.82} & {52.53 ± 1.22} & {63.02 ± 1.13} & {68.77 ± 0.99} & {69.73 ± 1.01} & {73.78 ± 0.38} & {73.69 ± 0.38} & {75.04 ± 0.24} & {77.26 ± 0.30} \\
{{CTHGE(w/o PT)}}       & {24.13 ± 1.83} & {24.96 ± 0.71} & {52.70 ± 1.18} & {63.24 ± 1.26} & {68.81 ± 1.01} & {69.80 ± 0.98} & {73.98 ± 0.47} & {73.77 ± 0.34} & {75.49 ± 1.92} & {77.43 ± 0.22} \\
{{CTHGE(w/o IL)}}       & {24.52 ± 1.49} & {25.00 ± 0.52} & {52.87 ± 1.30} & {63.30 ± 1.03} & {68.99 ± 0.78} & {69.89 ± 1.32} & {74.12 ± 0.62} & {73.80 ± 0.45} & {75.53 ± 0.76} & {77.58 ± 0.41} \\
{\textbf{{CTHGE}}} & \textbf{{24.84 ± 1.29}} & \textbf{{25.09 ± 0.42}} & \textbf{{53.02 ± 1.01}} & \textbf{{63.57 ± 0.37}} & \textbf{{69.30 ± 0.84}} & \textbf{{70.62 ± 0.71}} & \textbf{{74.38 ± 1.19}} & \textbf{{74.08 ± 1.18}} & \textbf{{75.67 ± 1.15}} & \textbf{{77.71 ± 0.55}} \\
\bottomrule
\end{tabular}}
% \vspace{-5pt}  
\end{table*}

\subsection{Ablation Study}
We conduct an ablation study to evaluate the effectiveness of our editing framework. Using multi-GCN as the base model, we introduce a random graph structure transformation technique, RandDropEdge, which randomly removes edges from the original graph with a probability of 50\%, to assess its impact on model performance. As shown in Table~\ref{tab:AbofCTHGE}, we observe that random pruning results in a performance decline across all models, whereas our CTHGE framework yields significant improvements. CTHGE enhances graph editing confidence through its two-phase approach, demonstrating greater improvements.

To further investigate the contributions of each component, we explore the effect of two phase by conducting additional settings. ``w/o Phase II'', which removes the second phase and only uses the first phase, showing that the direct pruning in the first phase alone leads to substantial improvements. Additionally, we examine the impact of two specific strategies. ``w/o PT'', which removes the Target-Driven Auxiliary Learning Paradigm Task, and ``w/o IL'', which removes Iterative Logits Refinement. The results indicate that the Paradigm Task plays a key role in driving performance improvement, highlighting that our proposed Paradigm Task, which leverages target information, is highly effective. By incorporating the pre-training and fine-tuning paradigm, we effectively utilize target information in node classification tasks to achieve a more confident and optimal edge set.

\begin{figure}[!t]
    \centering
	\begin{subfigure}{0.495\linewidth}
		\centering
		\includegraphics[width=1\linewidth]{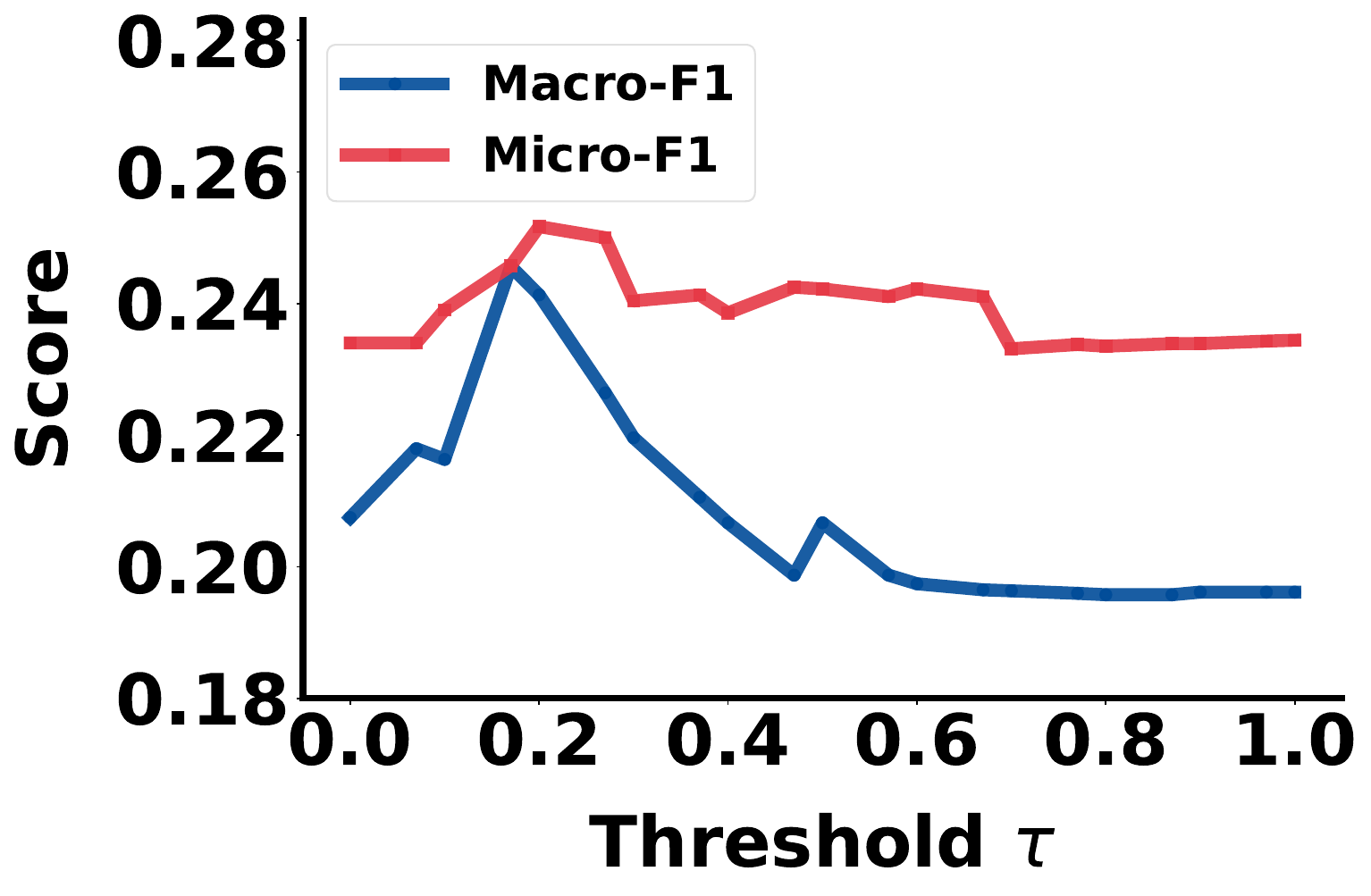}
		\caption{Liar}
		\label{fig-exp_Threshold_F1_Liar}
	\end{subfigure}
	\hfill
	\begin{subfigure}{0.495\linewidth}
		\centering
		\includegraphics[width=1\linewidth]{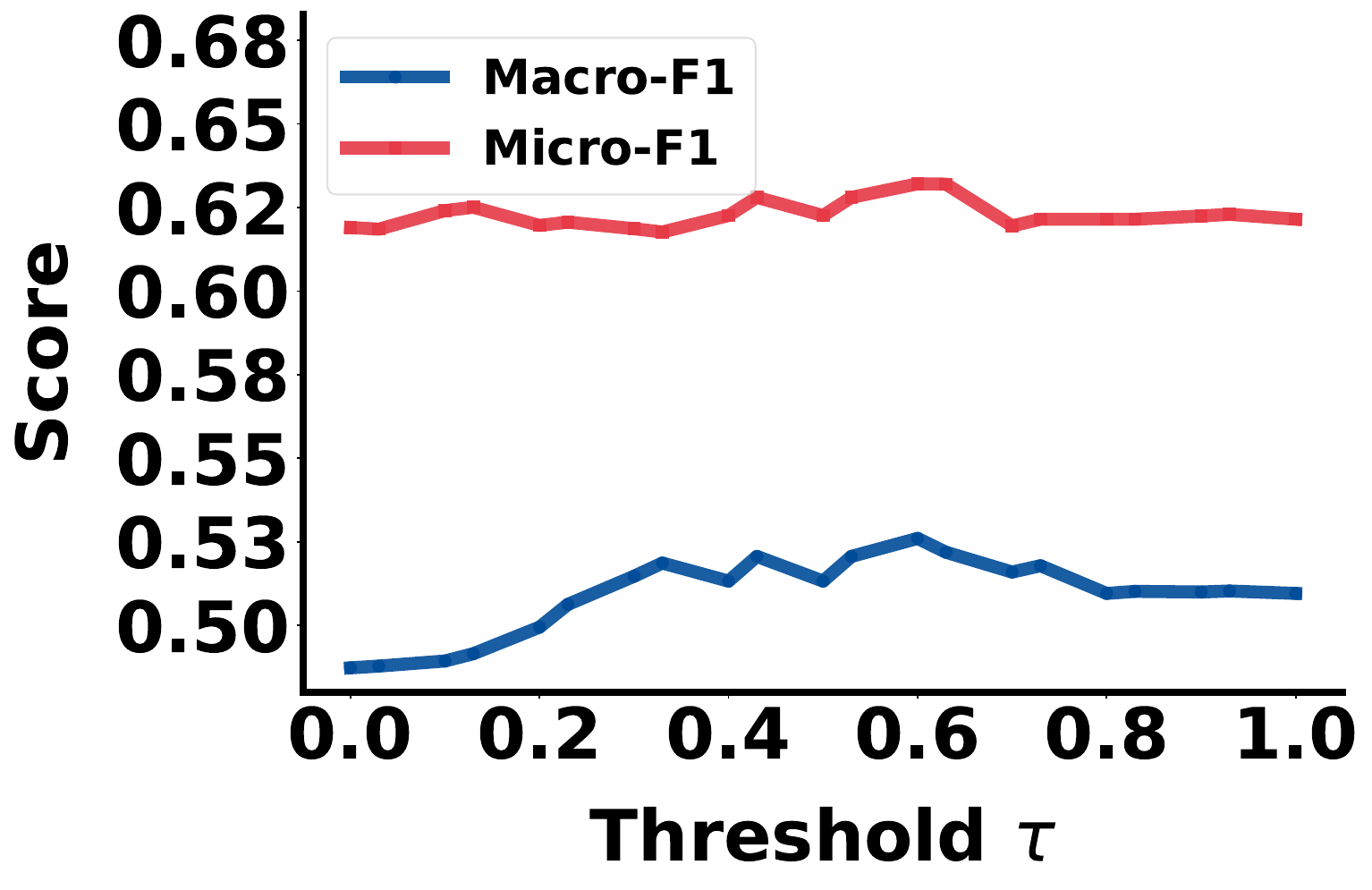}
		\caption{Actor}
		\label{fig-exp_Threshold_F1_Actor}
	\end{subfigure}
    \vskip\floatsep
    \centering
	\begin{subfigure}{0.495\linewidth}
		\centering
		\includegraphics[width=1\linewidth]{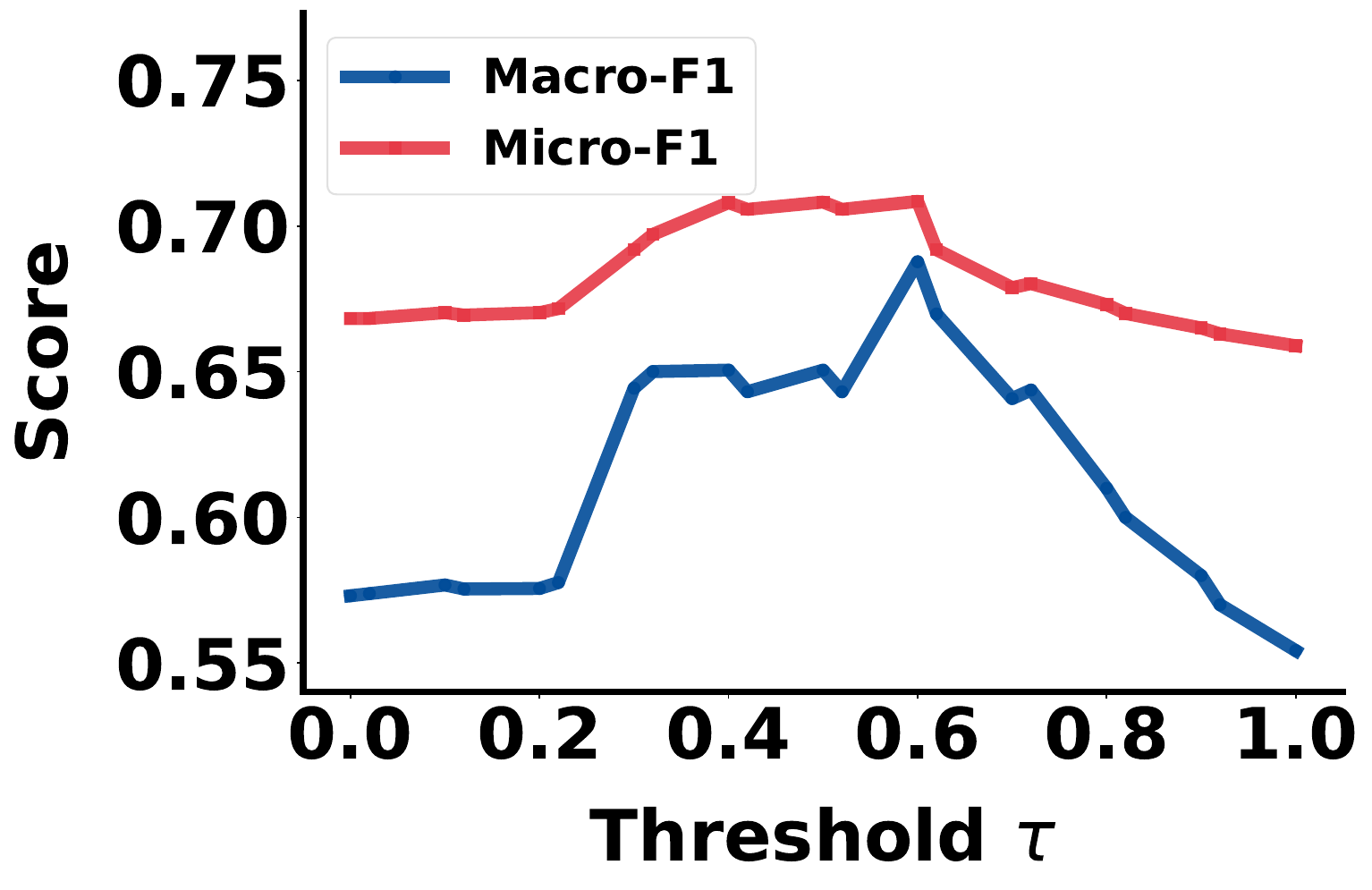}
		\caption{American}
		\label{fig-exp_Threshold_F1_American}
	\end{subfigure}
    \hfill
	\begin{subfigure}{0.495\linewidth}
		\centering
		\includegraphics[width=1\linewidth]{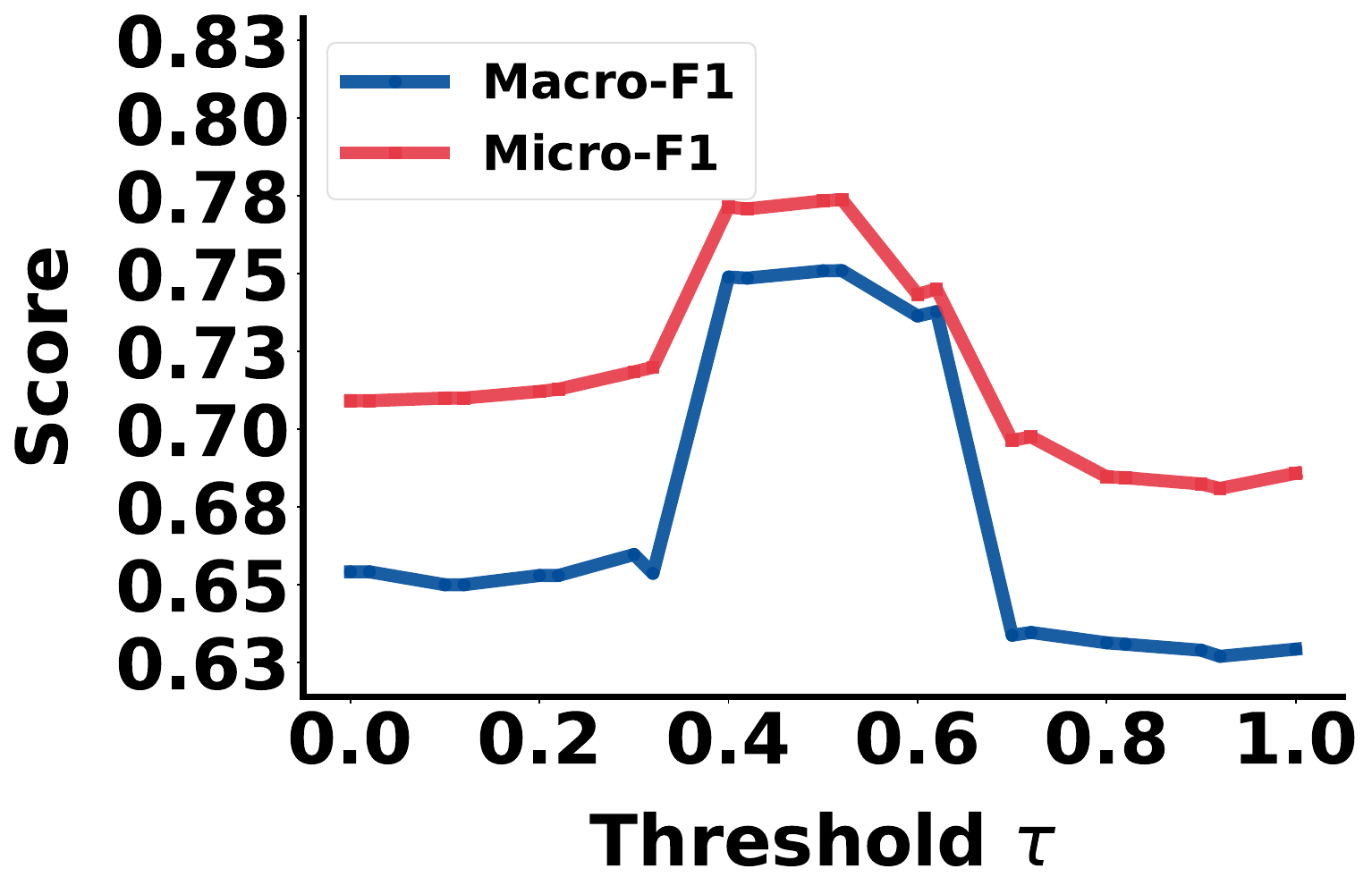}
		\caption{MIT}
		\label{fig-exp_Threshold_F1_MIT}
	\end{subfigure}
    % \vspace{-8pt}  
	\caption{Hyper-parameter study of $\tau$ and F1.}
	\label{fig:exp_Threshold_F1}
    % \vspace{-10pt}  
\end{figure}

\begin{figure}[!t]
    \centering
	\begin{subfigure}{0.49\linewidth}
		\centering
		\includegraphics[width=1\linewidth]{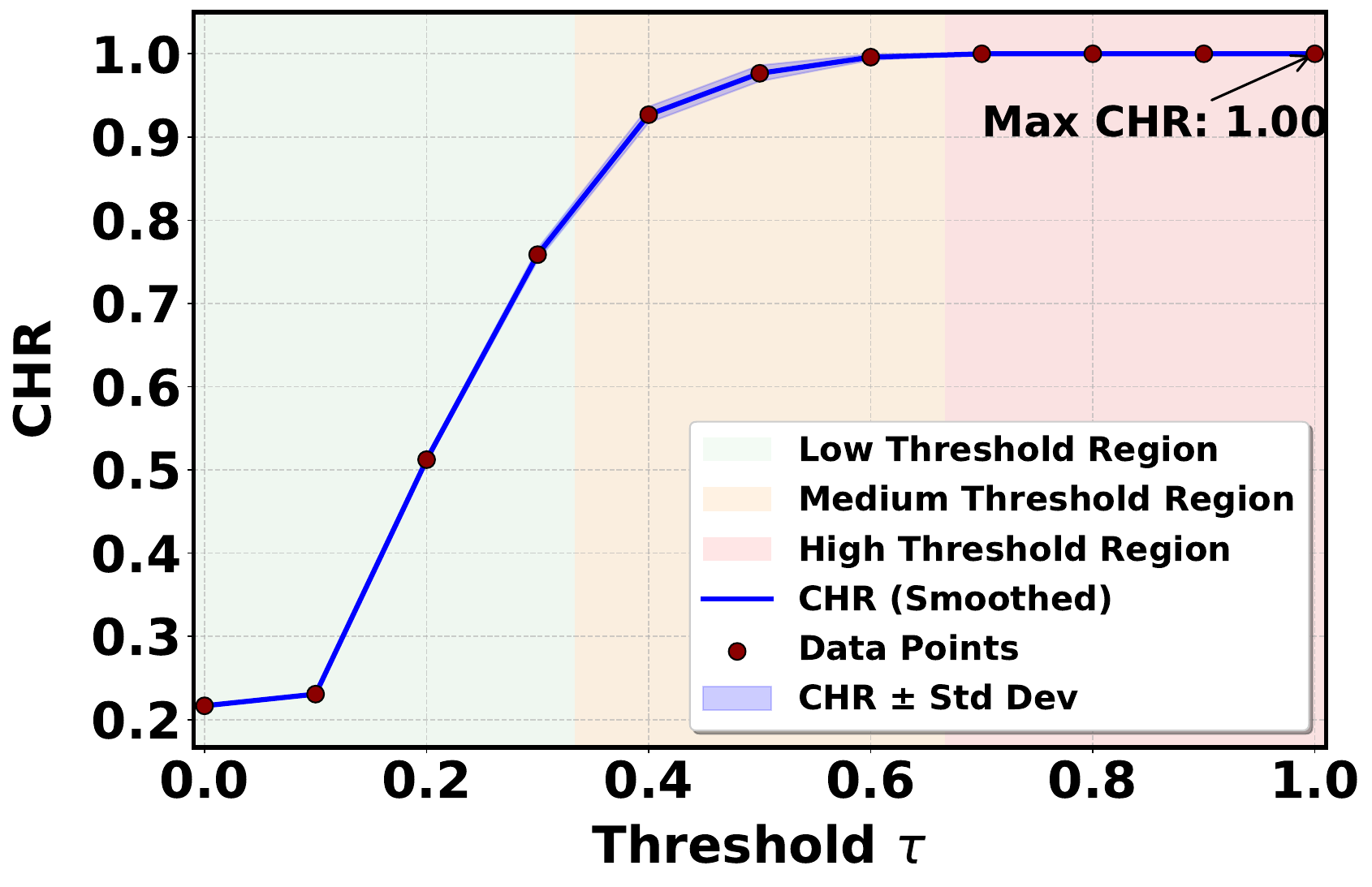}
		\caption{Liar}
		\label{fig-exp_Threshold_CHR_Liar}
	\end{subfigure}
	\hfill
	\begin{subfigure}{0.49\linewidth}
		\centering
		\includegraphics[width=1\linewidth]{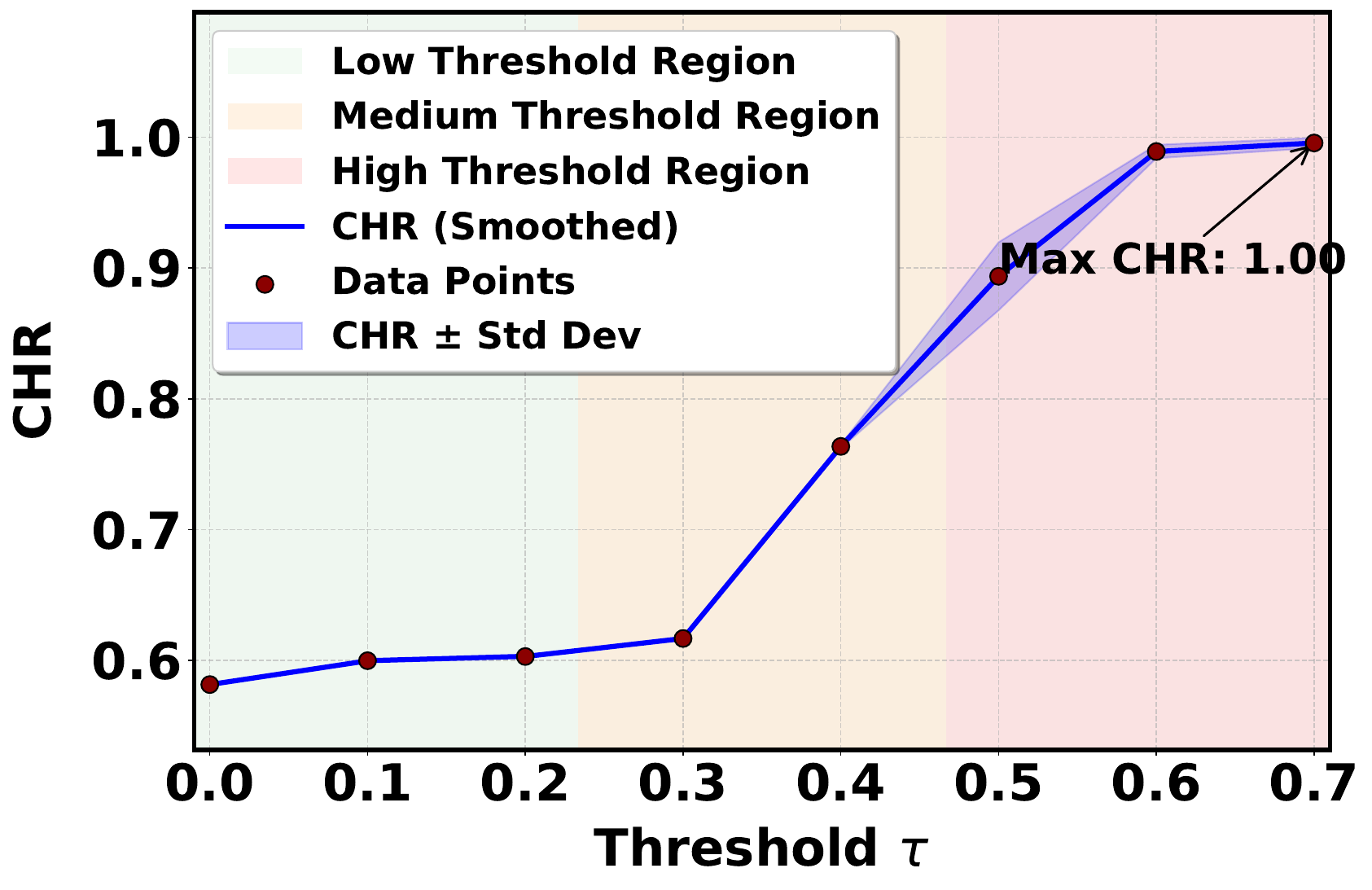}
		\caption{American}
		\label{fig-exp_Threshold_CHR_American}
	\end{subfigure}

    \vskip\floatsep

    \centering
	\begin{subfigure}{0.493\linewidth}
		\centering
		\includegraphics[width=1\linewidth]{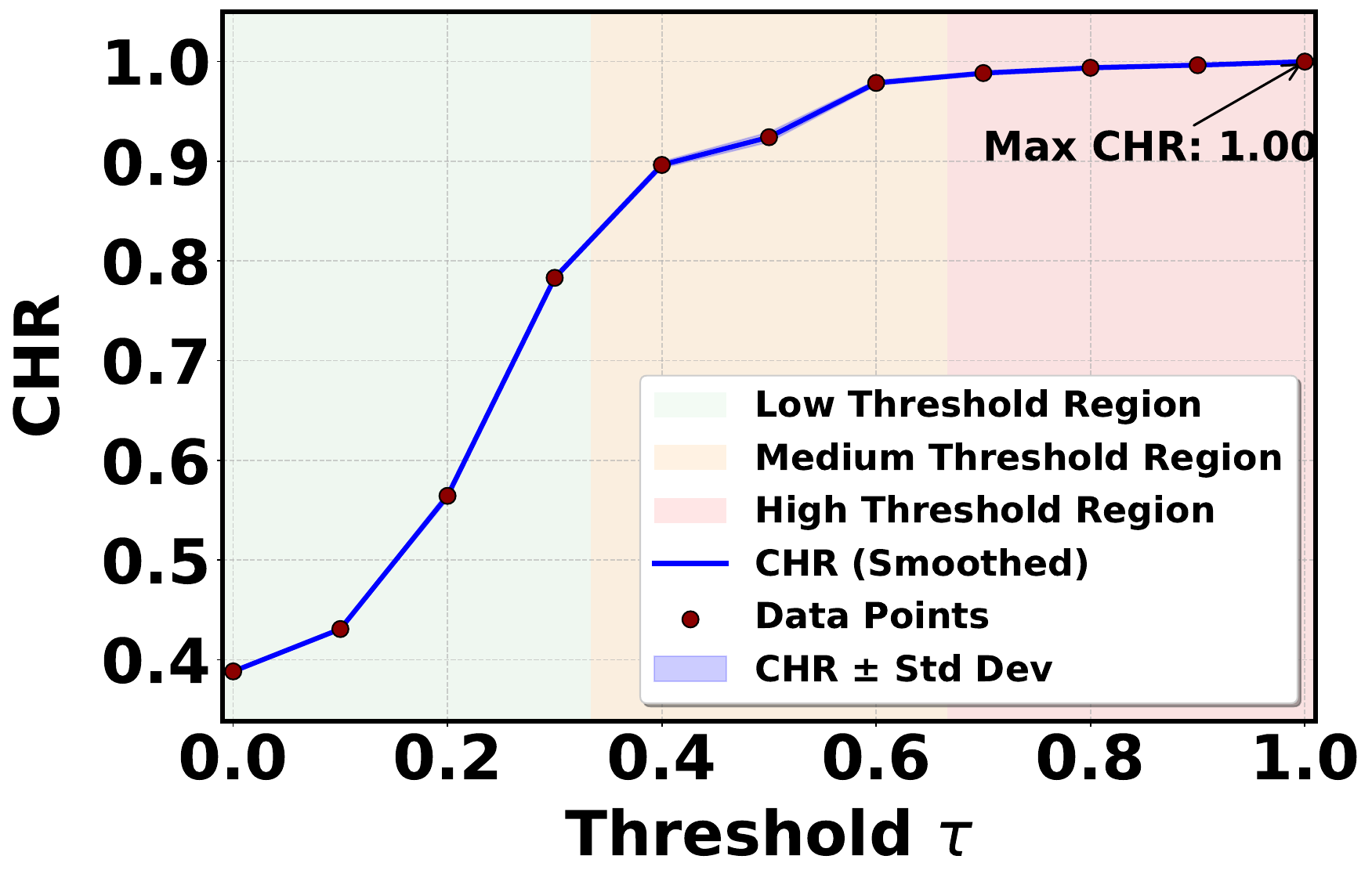}
		\caption{Actor}
		\label{fig-exp_Threshold_CHR_Actor}
	\end{subfigure}
    \hfill
	\begin{subfigure}{0.493\linewidth}
		\centering
		\includegraphics[width=1\linewidth]{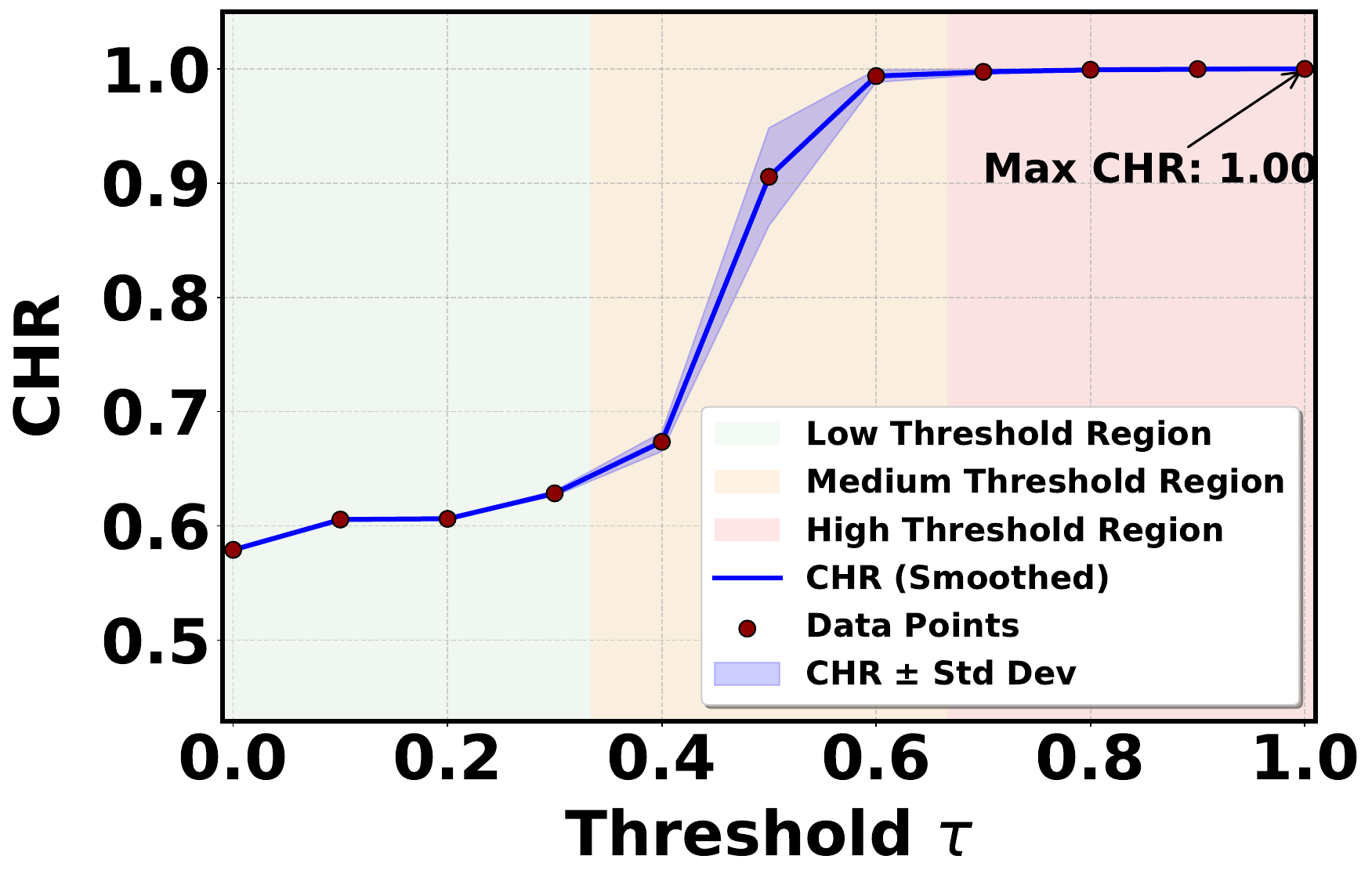}
		\caption{MIT}
		\label{fig-exp_Threshold_CHR_MIT}
	\end{subfigure}
    % \vspace{-8pt}  
	\caption{Hyper-parameter study of $\tau$ and CHR.}
	\label{fig:exp_Threshold_CHR}
    % \vspace{-15pt}  
\end{figure}

\subsection{Hyper-parameter Study}

Since the parameters for the second phase can generally follow a fixed strategy, we have already explored the effects of each component of the second phase in our ablation study. Therefore, in this hyper-parameter study, we primarily focus on examining the effect of the Threshold \(\tau\).

\noindent\textbf{Threshold \(\tau\) and Model Performance.} As shown in Fig.~\ref{fig:exp_Threshold_F1}, we observe an approximately unimodal relationship between the threshold \(\tau\) and model performance. The best performance is typically achieved when the threshold \(\tau\) falls within the range of 0.3 to 0.6. With a low threshold, heterophilic edges are not sufficiently pruned, which limits the effectiveness of the pruning algorithm. Conversely, a high threshold enforces strict edge retention, preserving only a small set of high-homophily edges, which can lead to a decrease in HGNN node classification performance.

\noindent\textbf{Threshold \(\tau\) and Cross-Type Homophily Ratio.} As shown in Fig.~\ref{fig:exp_Threshold_CHR}, we observe that as the threshold \(\tau\) increases, the CHR also increases, indicating a growing proportion of homophilous edges among the cross-type edges in the pruned dataset. However, model performance initially improves as the threshold \(\tau\) increases but subsequently declines, in contrast to CHR, which continues to increase. This divergence occurs because, beyond a certain threshold, the pruning process begins to eliminate many cross-type edges that exhibit homophily, retaining only a few with very high homophily. From the perspective of CHR, a low threshold \(\tau\) does not achieve sufficient pruning, while an excessively high threshold removes too many homophilous cross-type edges. A mid-range threshold \(\tau\) generally yields the optimal model performance.

\section{Conclusion}
This study introduces cross-type homophily ratio(CHR), a novel metric to quantify cross-type homophily in HGs, representing the first exploration of cross-type edges in this domain. We establish the theoretical foundations of CHR and validate its empirical effectiveness in HGNNs. Additionally, we propose CTHGE, a graph editing framework that enhances CHR and improves HGNN performance on HG node classification tasks. Serving as a versatile plug-in compatible with various HGNN architectures, CTHGE demonstrates its effectiveness in improving CHR and node classification performance through extensive experiments.

\section*{Acknowledgment}
This work is supported by the Natural Science Foundation of Guangdong Province, China (2024A1515110162).

\section*{GenAI Usage Disclosure}
Generative AI tools were used exclusively for minor language edits, such as grammar and clarity improvements, comparable to standard writing assistants. No content was generated or substantially rewritten using AI.
%%
%% The acknowledgments section is defined using the "acks" environment
%% (and NOT an unnumbered section). This ensures the proper
%% identification of the section in the article metadata, and the
%% consistent spelling of the heading.
% \begin{acks}
% To Robert, for the bagels and explaining CMYK and color spaces.
% \end{acks}

%%
%% The next two lines define the bibliography style to be used, and
%% the bibliography file.
\bibliographystyle{ACM-Reference-Format}
\bibliography{sample-base}

%%
%% If your work has an appendix, this is the place to put it.
% \clearpage
\appendix

\end{document}